\documentclass[a4paper,UKenglish,cleveref, autoref, thm-restate]{lipics-v2021}

\graphicspath{{./figures/}}

\usepackage{tikz}
\usepackage{mathtools}
\usepackage{todonotes}
\usepackage{xspace}
\usepackage{subcaption}
\usepackage{enumitem}
\usepackage[strings]{underscore}

\usepackage{ifthen}
\newboolean{lipics}
\setboolean{lipics}{true}

\ifthenelse{\boolean{lipics}}{
  \pdfoutput=1 \hideLIPIcs  \bibliographystyle{plainurl}\title{Parameterized Complexity of Simultaneous Planarity}

  \author{Simon D. Fink}{Algorithms and Complexity Group, TU Wien, Austria}{sfink@ac.tuwien.ac.at}{https://orcid.org/0000-0002-2754-1195}{Funded by the Vienna Science and Technology Fund (WWTF) \newline[10.47379/ICT22029]}
  
  \author{Matthias Pfretzschner}{Faculty of Informatics and Mathematics, University of Passau, Germany}{pfretzschner@fim.uni-passau.de}{https://orcid.org/0000-0002-5378-1694}{Funded by the Deutsche Forschungsgemeinschaft (German Research Foundation, DFG) under grant RU-1903/3-1.}
  
  \author{Ignaz Rutter}{Faculty of Informatics and Mathematics, University of Passau, Germany}{rutter@fim.uni-passau.de}{https://orcid.org/0000-0002-3794-4406}{Funded by the Deutsche Forschungsgemeinschaft (German Research Foundation, DFG) under grant RU-1903/3-1.}
  
  \authorrunning{S.\,D. Fink and M. Pfretzschner and I. Rutter} 
  \Copyright{Simon D. Fink and Matthias Pfretzschner and Ignaz Rutter} 
\ccsdesc[500]{Mathematics of computing~Graph algorithms}
  
  \keywords{SEFE, Simultaneous Embedding, Simultaneous Planarity, Parameterized Complexity, FPT} 
  \category{} 
  \relatedversion{}

  \nolinenumbers

  \newtheorem{redrule}{Reduction Rule}[section]
}
{
\usepackage{graphicx}\usepackage{multirow}\usepackage{amsmath,amssymb,amsfonts}\usepackage{amsthm}\usepackage{mathrsfs}\usepackage[title]{appendix}\usepackage{xcolor}\usepackage{textcomp}\usepackage{manyfoot}\usepackage{booktabs}\usepackage{algorithm}\usepackage{algorithmicx}\usepackage{algpseudocode}\usepackage{listings}\usepackage{cleveref}
\usepackage{tikz}
\usepackage{mathtools}
\usepackage{todonotes}
\usepackage{xspace}
\usepackage{subcaption}
\usepackage{enumitem}

\newtheoremstyle{thmstyleone}{1em}{1em}{\itshape}{}{\bfseries}{.}{0.5em}{}\theoremstyle{thmstyleone}\newtheorem{theorem}{Theorem}\newtheorem{lemma}{Lemma}\newtheorem{corollary}{Corollary}\newtheorem{redrule}{Reduction Rule}
\theoremstyle{thmstyletwo}
\theoremstyle{thmstylethree}\newtheorem{definition}{Definition}
\raggedbottom

\title[Parameterized Complexity of Simultaneous Planarity]{Parameterized Complexity of Simultaneous Planarity}

\author[1]{\fnm{Simon D.} \sur{Fink}}\email{sfink@ac.tuwien.ac.at}

\author*[2]{\fnm{Matthias} \sur{Pfretzschner}}\email{pfretzschner@fim.uni-passau.de}

\author[2]{\fnm{Ignaz} \sur{Rutter}}\email{rutter@fim.uni-passau.de}

\affil[1]{\orgdiv{Algorithms and Complexity Group}, \orgname{TU Wien}, \city{Vienna}, \country{Austria}}

\affil[2]{\orgdiv{Faculty of Informatics and Mathematics}, \orgname{University of Passau}, \city{Passau}, \country{Germany}}

\keywords{Simultaneous Planarity, SEFE, Parameterized Complexity, FPT}

}

\Crefname{redrule}{Reduction Rule}{Reduction Rules}
\newcommand{\G}[1]{\ensuremath{\excl{G}{#1}}\xspace}
\newcommand{\mcE}{\ensuremath{\mathcal{E}}\xspace}
\newcommand{\SU}{\ensuremath{S^\cup}\xspace}
\renewcommand{\S}[1]{\ensuremath{\excl{S}{#1}}\xspace}
\newcommand{\mcEU}{\ensuremath{\mathcal{E}^\cup}\xspace}
\newcommand{\mcEexcl}[1]{\excl{\ensuremath{\mathcal{E}}}{#1}\xspace}
\newcommand{\hatmcEexcl}[1]{\excl{\ensuremath{\hat{\mathcal{E}}}}{#1}\xspace}
\newcommand{\GU}{\ensuremath{G^{\cup}}\xspace}
\newcommand{\hatGU}{\ensuremath{\hat{G}^{\cup}}\xspace}
\newcommand{\hatmcEU}{\ensuremath{\hat{\mathcal{E}}^\cup}\xspace}
\newcommand{\BO}[1]{\ensuremath{\mathcal{O}(#1)}}
\newcommand{\SEFE}{\textnormal{\textsc{SEFE}}\xspace}
\newcommand{\SEFEl}{\textnormal{\textsc{Simultaneous Embedding With Fixed Edges}}\xspace}

\newcommand{\Do}{\ensuremath{\Delta_{1}}\xspace}
\newcommand{\Di}{\ensuremath{\excl{\Delta}{i}}\xspace}
\newcommand{\DU}{\ensuremath{\Delta^\cup}\xspace}

\newcommand{\rulesep}{\unskip\ \vrule\ }
\DeclareRobustCommand*\circled[1]{\tikz[baseline=(char.base)]{
    \node[shape=circle,draw,inner sep=0.5pt] (char) {\ensuremath{#1}};}}
\DeclareRobustCommand{\excl}[2]{\ensuremath{#1^{\tikz[baseline=(char.base)]{
        \node[shape=circle,draw,inner sep=0.5pt] (char) {\tiny\ensuremath{#2}};}}}}
\DeclareMathOperator{\vc}{vc}
\DeclareMathOperator{\fes}{fes}
\DeclareMathOperator{\cv}{cv}
\DeclareMathOperator{\shc}{cc}

\begin{document}
  
  \maketitle
  
  \newcommand{\abstracttext}{Given $k$ input graphs $\G{1}, \dots ,\G{k}$, where each pair \G{i}, \G{j} with $i \neq j$ shares the same graph $G$,
    the problem \SEFEl (\SEFE) asks whether there exists a planar drawing for each input graph such that all drawings
    coincide on $G$. While \SEFE is still open for the case of two input graphs, the problem is NP-complete for $k \geq
    3$~\cite{Schaefer13}.
    
    In this work, we explore the parameterized complexity of \SEFE. We show that \SEFE is FPT
    with respect to $k$ plus the vertex cover number or the feedback edge set number of the union graph $\GU = \G{1} \cup
    \dots \cup \G{k}$.
    Regarding the shared graph~$G$, we show that \SEFE is NP-complete, even if $G$ is a tree with maximum degree 4.
    Together with a known NP-hardness reduction~\cite{Angelini15}, this
    allows us to conclude that several parameters of $G$, including the maximum degree, the maximum number of degree-1 neighbors,
    the vertex cover number, and the number of cutvertices are intractable. We also settle the tractability of all pairs of these
    parameters. We give FPT algorithms for the vertex cover number plus either of the first two parameters and for the number of
    cutvertices plus the maximum degree, whereas we prove all remaining combinations to be intractable.
  }
  
  \ifthenelse{\boolean{lipics}}
  {
    \begin{abstract}
      \abstracttext
    \end{abstract}
  }
  {
    \abstract{
      \abstracttext
    }
  }
  
  \section{Introduction}
  
  Let $\G1= (V, \excl{E}{1})$, \dots,  $\G{k} = (V, \excl{E}{k})$ denote $k$ graphs on the same vertex set, where each
  pair $(\G{i}, \G{j})$ has a common \emph{shared graph} $\G{i} \cap \G {j} = (V, \excl{E}{i} \cap \excl{E}{j})$. The
  problem \textsc{\SEFEl} (\SEFE) asks whether there exist planar drawings $\excl{\Gamma}{1}, \dots,
  \excl{\Gamma}{k}$ of $\excl{G}{1},\dots,\excl{G}{k}$, respectively, such that each pair $\excl{\Gamma}{i},
  \excl{\Gamma}{j}$ induces the same drawing on the shared graph $\G{i} \cap \G{j}$~\cite{Erten05}. 
  We refer to such a $k$-tuple of drawings as a \emph{simultaneous drawing}. Unless stated otherwise, we assume that~$k$
  is part of the input and not necessarily a constant. In this work, we focus on the restricted \emph{sunflower case} of
  \SEFE, where every pair of input graphs has the same shared graph~$G$. For more than two input graphs, \SEFE has been
  proven to be NP-complete \cite{Gassner06}, even in the sunflower case~\cite{Schaefer13}. For two input graphs, however,
  \SEFE remains open.

  \begin{figure}[t]
    \centering
    \begin{subfigure}[t]{\textwidth}
      \centering
      \includegraphics[scale=1,page=1]{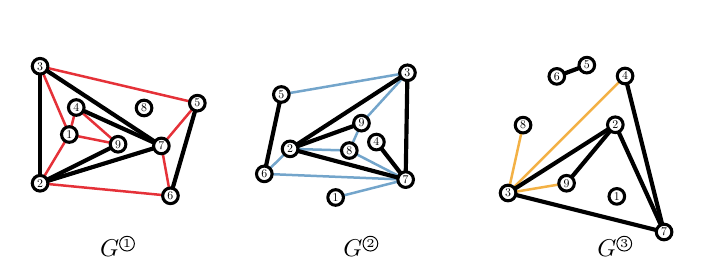}
      \caption{}
      \label{fig:ThreeSefeA}
    \end{subfigure}
    \vskip\baselineskip
    \begin{subfigure}[t]{\textwidth}
      \centering
      \includegraphics[scale=1,page=2]{figures/AnimationSteps}
      \caption{}
      \label{fig:ThreeSefeB}
    \end{subfigure}
    \caption{(a) Three planar graphs $\excl{G}{1}$, \excl{G}{2}, and \excl{G}{3} with the shared graph $G$ highlighted in
      black. (b) Planar drawings of the same three graphs, but the subgraph $G$ is drawn the same way in all three
      drawings.}
    \label{fig:ThreeSefe}
  \end{figure}
  
  One of the main applications for the \SEFE problem is dynamic graph drawing. Given a graph that changes over time, a
  visualization of $k$ individual snapshots of the graph should aesthetically display the changes between successive
  snapshots. 
  To this end, it is helpful to draw unchanged parts of the graph consistently.
  \Cref{fig:ThreeSefe} gives an example illustrating this for $k = 3$ snapshots.
  Using the same layout for the shared graph in \Cref{fig:ThreeSefeB} notably simplifies recognizing similarities and differences
  when compared to the varying layouts in \Cref{fig:ThreeSefeA}.

  In recent years, \SEFE received much attention and many algorithms solving restricted cases have been
  developed~\cite{Rutter20,BlasiusKR13}.
  Most notably, this includes the cases where every connected component of $G$ is biconnected~\cite{BlasiusKR18} or has a
  fixed embedding~\cite{Blaesius15}, and the case where $G$ has maximum degree~3~\cite{BlasiusKR18}. Very recently, Fulek
  and T{\'{o}}th \cite{Fulek20} solved \SEFE for $k = 2$ input graphs if the shared graph is connected and Bläsius et
  al.~\cite{Blaesius21B} later improved the running time from $O(n^{16})$ to quadratic using a reduction to the problem
  \textsc{Synchronized Planarity}. For $k \geq 3$ input graphs, however, the same restricted case remains
  NP-complete~\cite{Angelini15}. Despite the plethora of work dealing with \SEFE in restricted cases, we are not aware of
  parameterized approaches to \SEFE.
  \ifthenelse{\boolean{lipics}}
  {
  \begin{figure}[t]
    \centering
    \begin{minipage}{0.705\linewidth}
      \includegraphics[scale=1,page=1]{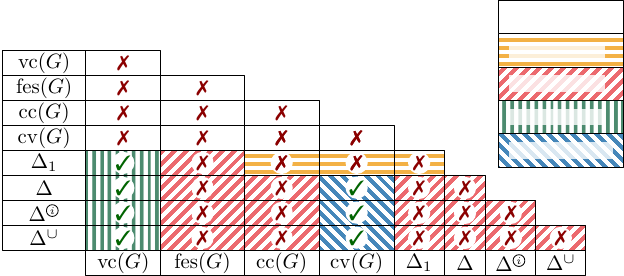}\end{minipage}\begin{minipage}{0.15\linewidth}
      \renewcommand{\arraystretch}{1.23}
      \hspace{-1.55cm}
      \begin{tabular}{l}
        \Cref{thm:hardStar} \\
        \Cref{thm:hardCvDo} \\
        \Cref{thm:unionmaxdeg4} \\
        \Cref{thm:vcDeg1} \\
        \Cref{thm:CvMaxDeg}
      \end{tabular}
      \vspace{1.83cm}
    \end{minipage}
    \caption{Complexity of \SEFE parameterized by combinations of the parameters of \Cref
      {sec:sharedParameters}, assuming that the number of input graphs $k$ is part of the input.
      Check marks indicate tractability and crosses indicate para-NP-hardness.
      If $k$ is fixed, then \Cref{thm:unionmaxdeg4} does not imply hardness, as the reduction requires an unbounded number of input graphs; the corresponding cells of the table are therefore still open.
      All other results also hold in this case.
}
    \label{fig:Table1}
  \end{figure}
  }
  {
  \begin{figure}[t]
    \centering
    \begin{minipage}{0.705\linewidth}
      \includegraphics[scale=1,page=1]{Tables}\end{minipage}\begin{minipage}{0.15\linewidth}
      \renewcommand{\arraystretch}{1.32}
      \hspace{-0.98cm}
      \begin{tabular}{l}
        \Cref{thm:hardStar} \\
        \Cref{thm:hardCvDo} \\
        \Cref{thm:unionmaxdeg4} \\
        \Cref{thm:vcDeg1} \\
        \Cref{thm:CvMaxDeg}
      \end{tabular}
      \vspace{1.8cm}
    \end{minipage}
    \caption{Complexity of \SEFE parameterized by combinations of the parameters of \Cref
      {sec:sharedParameters}, assuming that the number of input graphs $k$ is part of the input.
      Check marks indicate tractability and crosses indicate para-NP-hardness.
      If $k$ is fixed, then \Cref{thm:unionmaxdeg4} does not imply hardness, as the reduction requires an unbounded number of input graphs; the corresponding cells of the table are therefore still open.
      All other results also hold in this case.
}
    \label{fig:Table1}
  \end{figure}
  }
  In this work, we explore the parameterized complexity of \SEFE. In \Cref{sec:unionParameters}, we consider parameters of
  the \emph{union graph} $\GU = \G1 \cup \dots \cup \G{k}$. We show that \SEFE is FPT
  with respect to $k$ plus the vertex cover number or the feedback edge set number of \GU. Additionally, we prove intractability for \SEFE
  parameterized by the twin cover number of \GU.
In \Cref{sec:sharedParameters}, we turn to parameters of the shared graph~$G$. We consider as parameters the vertex cover
  number $\vc(G)$, the feedback edge set number $\fes(G)$, the number of connected components $\shc (G)$, the number of cutvertices
  $\cv(G)$, the maximum number of degree-1 neighbors~\Do in $G$, the maximum degree $\Delta$ of $G$, the maximum degree~\Di among all input
  graphs, and the maximum degree \DU of the union graph. For the latter four parameters, note that $\Do \leq \Delta
  \leq \Di \leq \DU$. \Cref{fig:Table1} gives an overview over these parameters together with their individual and
  pairwise tractabilities. In \Cref{sec:maxdeg}, we show that \SEFE is NP-complete even if the shared graph is a tree with
  maximum degree~4. This allows us to conclude that most parameters and their combinations are intractable. The only
  exceptions are due to the FPT-algorithms from \Cref{sec:shared-vc} for $\vc(G) + \Do$ and from \Cref{sec:cvMaxDeg} for
  $\cv(G) + \Delta$.
  
  \section{Preliminaries}
  
  For $k \in \mathbb{N}$, we define $[k] \coloneqq \{1,\dots, k\}$.
  Let $G = (V, E)$ be a simple graph. The \emph{open neighborhood} $N_G(v)$ of a vertex $v \in V$ denotes the set of vertices
  adjacent to $v$ in~$G$. The \emph{closed neighborhood} $N_G[v] \coloneqq N_G(v) \cup \{v\}$ additionally contains~$v$.
  If the graph is clear from context, we simply write $N(v)$ and $N[v]$, respectively. An \emph{induced subgraph} $H=(V',E')$ of~$G$
  contains all edges $E'\subseteq E$ with both endpoints in a given set~$V'\subseteq V$. The graph $G$ is \emph{connected},
  if, for every pair $u, v \in V$, there exists a path between $u$ and $v$ in $G$. A \emph{separating $k$-set} is a set $S
  \subseteq V$ with $|S| = k$ such that the graph $G - S$ obtained by removing $S$ is disconnected. The \emph{split
    components} of a separating $k$-set $S$ are the maximal subgraphs of~$G$ that are not disconnected by removing $S$. A
  separating $1$-set is also called a \emph{cutvertex}, a separating $2$-set is a \emph{separating pair}. We say that $G$
  is \emph{biconnected} if it contains no cutvertex and \emph{triconnected} if it contains no separating pair. A maximal
  induced subgraph of $G$ that is (bi-)connected is called a \emph{(bi-)connected component} of $G$. A
  biconnected component is also called a \emph{block}. A \emph{wheel} is a graph that consists of a cycle and an
  additional vertex adjacent to every vertex of the cycle.
  
  \subparagraph{Parameterized Complexity} 
  A parameterized problem $L \subseteq \Sigma^* \times \mathbb{N}$ is \emph{fixed-parameter tractable (FPT)}, if $L$ can
  be solved in time $f(k) \cdot n^{O(1)}$, where $f$ is some computable function and $k$ is the parameter. The
  problem $L$ is \emph{para-NP-hard} with respect to $k$ if
  $L$ is NP-hard even for constant values of $k$. For a graph $G = (V, E)$, a \emph{vertex cover} is a vertex set $C
  \subseteq V$ such that every edge $e \in E$ is incident to a vertex in~$C$. The \emph{vertex cover number} $\vc(G)$ is
  the size of a minimum vertex cover of $G$. The following lemma states that the vertex cover number of a planar graph gives an upper bound for the number of its high-degree vertices.
  
  \begin{lemma}
    \label{lm:planarDeg3}
    Let $G = (V, E)$ be a planar graph and let $N_3 \subseteq V$ denote the set of vertices of degree at least 3 in $G$.
    Then $|N_3| \leq 3 \cdot \vc(G)$.
  \end{lemma}
  \begin{proof}
    Let $C$ denote a minimum vertex cover of size $\vc(G)$ of $G$. Observe that all neighbors of a vertex in $V
    \setminus C$ must be contained in $C$, because otherwise, we immediately get an edge that is not covered by~$C$.
    Thus every vertex in $N_3 \cap (V \setminus C)$ has at least three neighbors in~$C$. Therefore, we can use planarity
    properties derived from Euler's Formula to infer that $|N_3 \cap (V \setminus C)| \leq \max(0, 2|C| - 4) \leq
    2|C|$ \cite[Lemma~13.3]{Fomin2019}. Together we get that
    \[|N_3| = |N_3 \cap C| + |N_3 \cap (V \setminus C)| \leq 3 \cdot \vc(G).\qedhere\]
  \end{proof}
  
  An edge $\{u, v\} \in E$ is a \emph{twin edge} if $N[u] = N[v]$, that is, $u$ and $v$ have the same neighborhood. A set
  $C \subseteq V$ is a \emph{twin cover} of $G$ if every edge $e \in E$ is a twin edge or incident to a vertex of $C$. The
  \emph{twin cover number} of $G$ is the size of a minimum twin cover of $G$.
A \emph{feedback edge set} of $G$ is an edge set $F \subseteq E$ such that $G - F$ is acyclic.
  The \emph{feedback edge set number} $\fes(G)$ is the minimum size of a feedback edge set of $G$.
  
  \subparagraph{SPQR-Trees} A pair $\{u, v\}$ of vertices is a \emph{split pair} of $G$ if 
  $\{u, v\}$ is a separating pair or a pair of adjacent vertices. An \emph{SPQR-tree}~\cite{Battista96}
  $\mathcal{T}$ is a tree that decomposes a biconnected graph~$G$ along its split pairs and can be computed in linear
  time~\cite{Gutwenger00}. The leaves of $\mathcal{T}$ are called \emph{Q-nodes} and correspond bijectively to the edges
  of $G$. Every inner node $v$ of $\mathcal{T}$ represents a biconnected multigraph (the \emph{skeleton} of $v$) that is
  either an \emph{S-node} (``\emph{series}'') consisting of a simple cycle, a \emph{P-node} (``\emph{parallel}'')
  consisting of two \emph{pole} vertices connected by at least three parallel edges, or an \emph{R-node} (``\emph{rigid}'') consisting of a triconnected simple graph. These skeletons consist of \emph{virtual edges}, where every
  virtual edge $\varepsilon$ represents a subgraph of~$G$, called the \emph{expansion graph} $\exp(\varepsilon)$ of
  $\varepsilon$.
  
  Fixing a planar embedding of $G$ completely fixes the embedding of the skeletons of all nodes in $\mathcal{T}$ and
  conversely, choosing a planar embedding for every skeleton in $\mathcal{T}$ uniquely defines a planar embedding of $G$.
  Therefore, all planar embedding decisions of a biconnected graph break down to ordering parallel edges in the skeletons of P-nodes and
  flipping the unique embedding of the skeleton of R-nodes.
  
  The \emph{SPQR-forest} of a graph is the collection of SPQR-trees of its biconnected components.

  \subparagraph{SEFE} The graph $\GU = (V, \bigcup_{i \in [k]}\excl{E}{i})$ consisting of the edges of all input graphs
  is called the \emph{union graph}.
  For brevity, we describe instances of \SEFE using the union graph by marking every edge of \GU  with the input graphs it is contained in.
  Every edge of \GU is either contained in exactly one input graph~\G{i}, or in all of them. In the former case, we
  say that the edge is \emph{$\circled{i}$-exclusive}, in the latter case it is \emph{shared}.
  The connected components of the shared graph are also called \emph{shared components}. 
  
  Jünger and Schulz~\cite{Juenger09} showed that an instance of \SEFE admits a simultaneous embedding if and only if there
  exist embeddings of the input graphs that are \emph{compatible}, i.e., they satisfy the following two requirements:
  (1) The cyclic ordering of the edges around every vertex of~$G$ must be identical in all embeddings.
  (2) For every pair $C$ and $C'$ of connected components in $G$, the face of $C$ that $C'$ is embedded in must be the same in all embeddings.
  We call the former property \emph{consistent edge orderings} and the latter property \emph{consistent relative positions}~\cite{Rutter20}.
  Note that any \SEFE instance with a non-planar input graph is a no-instance, we thus assume all input graphs to be planar.
  Furthermore, Bläsius et al.~\cite{BlasiusKR18} showed that one can assume that the union graph is biconnected.

  \section{Parameters of the Union Graph}
  \label{sec:unionParameters}
  
  In this section, we study the parameters vertex cover number, feedback edge set number, and twin cover number of the
  union graph $\GU = \G1 \cup \dots \cup \G{k}$. We give an FPT algorithm for each of the former two in combination with $k$ and show the latter to be
  intractable.
  
  \subsection{Vertex Cover Number}
  \label{sec:vc-union}
  
  For our first parameterization, we consider the vertex cover number of the union graph~$\vc
  (\GU)$. We use a similar approach as Bhore et al. \cite{Bhore2020} in their parameterization of the problem \textsc{Book
    Thickness}.
Let $C$ be a minimum vertex cover of \GU of size $\varphi \coloneq \vc(\GU)$ and let $k$ be the number of input graphs.
  For every vertex $v \in V \setminus C$, note that $N_{\GU}(v) \subseteq C$, as otherwise there would be an edge in $\GU$
  not covered by $C$. We group the vertices of $V (\GU) \setminus C$ into \emph{types} based on their neighborhood in all
  input graphs, i.e., two vertices $v_1$ and $v_2$ are of the same type if and only if $N_{\excl{G}{i}}(v_1)
  =N_{\excl{G}{i}}(v_2)$ for all $i \in [k]$; see \Cref{fig:VCUnion-types}. Let $\mathcal{P}$
  denote the partition of $V \setminus C$ based on the types of vertices.
Let $\mathcal{P}_{\geq3} \subseteq P$ be the types of $\mathcal{P}$ whose vertices have degree at least 3 in some input
  graph~$\G{i}$ and let $\mathcal{P}_{\leq2} \coloneq \mathcal{P} \setminus \mathcal{P}_{\geq3}$ denote the remaining
  types that have degree at most~2 in every input graph. Our first goal is to bound the number of vertices in
  $\mathcal{P}_ {\geq3}$. Since every input graph \excl{G}{i} is planar, we can use \Cref{lm:planarDeg3} to bound the
  number of vertices of degree at least 3 in $\G{i}$ linearly in $\vc(\G{i}) \leq \varphi$. The union graph
  \GU can therefore only contain $O(k\varphi)$ vertices that have degree at least 3 in some exclusive graph. Consequently,
  we obtain the upper bound $| \bigcup \mathcal{P}_{\geq3}| \in O(k\varphi)$.
  
  \begin{figure}[t]
    \centering
    \begin{subfigure}[t]{0.45\textwidth}
      \centering
      \includegraphics[scale=1,page=1]{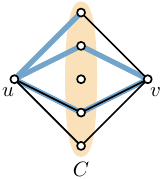}
      \caption{}
    \end{subfigure}
\hfill
    \begin{subfigure}[t]{0.45\textwidth}
      \centering
      \includegraphics[scale=1,page=2]{VCUnion-types}
      \caption{}
    \end{subfigure}
    \caption{(a) Vertices $u$ and $v$ of different types having different neighborhoods in the two input graphs.
      \excl{G}{1} is shown with thin black edges, \excl{G}{2} with thick blue edges, overlapping edges belong to~$G$.
      (b) Vertices $u$ and $v$ of the same~type.}
    \label{fig:VCUnion-types}
  \end{figure}
  
  It remains to bound the number of vertices contained in $\mathcal{P}_{\leq2}$. Since these vertices have degree at most
  2 in each input graph, there are $O(\binom{\varphi}{2}^k)$ distinct types in $\mathcal{P}_{\leq2}$, i.e., $|\mathcal{P}_
  {\leq2}| \in O(\varphi^{2k})$. To bound the number of vertices of each type in $\mathcal{P}_{\leq2}$, we use the
  following reduction rule; see \Cref{fig:RR-VCUnion}.
  \begin{figure}[t]
    \centering
    \includegraphics{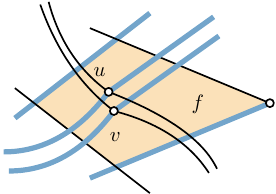}
    \caption{An example illustrating the proof of  \Cref{rr:vc-union} for $k = 2$ input graphs with vertices $u, v$ belonging to the same type of $\mathcal{P}_{\leq2}$. Vertex $v$ can be
      placed directly next to vertex $u$ in the face $f$ that $u$ is contained in without introducing any
      crossings in the input graphs.}
    \label{fig:RR-VCUnion}
  \end{figure}
  
  \begin{redrule}
    \label{rr:vc-union}
    If there exists a type $U \in \mathcal{P}_{\leq2}$ with $|U| > 1$, pick an arbitrary vertex $v \in U$ and reduce the
    instance to $(\GU-v, k)$.
  \end{redrule}
  
  \begin{proof}[Proof of Safeness]
    If the original instance admits a simultaneous drawing $\Gamma$, a simultaneous drawing of $\GU -v$ can be
    obtained by removing $v$ from $\Gamma$.
Conversely, let $\hat{\Gamma}^\cup = (\excl{\hat{\Gamma}}{1}, \dots, 
    \excl{\hat{\Gamma}}{k})$ be a simultaneous drawing of $\GU - v$. Pick an arbitrary vertex $u \in U \setminus \{v\}$ and let $f$ denote the face
    of $\hat{\Gamma}^\cup - u$ that $u$ is contained in. Since $u$ and $v$ have the same (at most) two neighbors in
    every input graph, $v$ can be placed close to $u$ in $f$ in $\hat{\Gamma}^\cup$ and each edge incident to $v$ can be
    routed directly next to the corresponding edge incident to $u$ without introducing crossings in the input graphs;
    see \Cref{fig:RR-VCUnion}.
    We therefore obtain a simultaneous drawing of~$\GU$.
  \end{proof}

  After exhaustively applying \Cref{rr:vc-union}, each type of  $\mathcal{P}_{\leq2}$ contains at most one vertex. Because
  we have $|\mathcal{P}_{\leq2}| \in O(\varphi^{2k})$, we thus get $|\bigcup \mathcal{P}_{\leq2}| \in O(\varphi^
  {2k})$. Using the upper bound $| \bigcup \mathcal{P}_{\geq3}| \in O(k\varphi)$ from above, we finally obtain a
  kernel of size $|V| = | \bigcup \mathcal{P}_{\geq3}| + |\bigcup \mathcal{P}_{\leq2}| + |C| \in O(\varphi^{2k})$ that can be solved in FPT time by enumerating all subsets of its vertices.
  Combined with the fact that a minimum vertex cover of
  size~$\varphi$ can be computed in time $O(1.2738^\varphi + \varphi n)$~\cite{Chen10}, we obtain the following result.
  
  \begin{theorem}
    \label{thm:vc-union}
    \SEFE is FPT with respect to the number of input graphs $k$ plus the vertex cover number $\varphi$ of the
    union graph \GU and admits an $ O(\varphi^{2k}) $ kernel.
  \end{theorem}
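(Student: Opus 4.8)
The plan is to assemble the ingredients developed above into a bounded kernel and then to solve that kernel by brute force. First I would compute a minimum vertex cover $C$ of \GU in FPT time $O(1.2738^\varphi + \varphi n)$ using the algorithm of Chen et al.~\cite{Chen10}, and partition $V \setminus C$ into the types $\mathcal{P} = \mathcal{P}_{\geq3} \cup \mathcal{P}_{\leq2}$ exactly as above. The two counting arguments are then reused verbatim: applying \Cref{lm:planarDeg3} to each planar input graph \G{i} bounds the number of its degree-$\geq 3$ vertices by $3 \cdot \vc(\G{i}) \leq 3\varphi$, so summing over all $k$ graphs gives $|\bigcup \mathcal{P}_{\geq3}| \in O(k\varphi)$; and since a type in $\mathcal{P}_{\leq2}$ is determined by choosing at most two neighbors from $C$ in each of the $k$ graphs, there are $O(\varphi^{2k})$ such types.

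Next I would apply \Cref{rr:vc-union} exhaustively. Each application deletes a single vertex and can be carried out in polynomial time, and there are at most $|V|$ applications, so this phase runs in polynomial time overall; its safeness has already been established. After exhaustive application every type of $\mathcal{P}_{\leq2}$ is a singleton, hence $|\bigcup \mathcal{P}_{\leq2}| \in O(\varphi^{2k})$. Adding $|C| = \varphi$ and the $O(k\varphi)$ vertices of $\bigcup \mathcal{P}_{\geq3}$, the reduced instance has $|V| \in O(\varphi^{2k})$ vertices; since each input graph is planar, its edge count is at most $3|V|-6$, so the total number of edges of \GU, and thus the entire instance size, is also bounded by a function of $\varphi + k$. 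This is the claimed kernel.

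It remains to solve the kernel. Since \SEFE lies in NP, a simultaneous embedding is certified by a purely combinatorial description, namely a rotation system at the vertices of $G$ together with the relative positions of the shared components, which can be checked against the compatibility conditions of Jünger and Schulz~\cite{Juenger09} (consistent edge orderings and consistent relative positions) in polynomial time. On an instance whose size is bounded by a function of $\varphi + k$, the number of such candidate certificates is likewise bounded by a function of $\varphi + k$, so exhaustively enumerating and testing them decides the kernel in time depending only on the parameter. Composing this with the FPT-time kernelization yields the overall FPT bound.

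The two conceptual difficulties---bounding the number of low-degree types and showing that a vertex may be deleted from an oversized type without changing solvability---have already been dispatched by the type analysis and the safeness of \Cref{rr:vc-union}. The only point that still needs care is the final step: confirming that bounding the instance size really does make the (NP-hard) kernel solvable in parameter-dependent time. This is exactly the standard ``bounded kernel implies FPT'' argument and follows from \SEFE being in NP, so that brute-force search over all combinatorial embeddings of the bounded kernel terminates within a function of $\varphi + k$.
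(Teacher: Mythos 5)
Your proposal is correct and takes essentially the same route as the paper: the identical type partition $\mathcal{P}_{\geq3} \cup \mathcal{P}_{\leq2}$, the same bounds via \Cref{lm:planarDeg3} and the $O(\varphi^{2k})$ count of low-degree types, exhaustive application of \Cref{rr:vc-union}, and the combination with the vertex cover algorithm of Chen et al.~\cite{Chen10}. The only (harmless) difference is cosmetic: you spell out the final brute-force step on the kernel more carefully --- enumerating rotation systems and relative positions and checking the compatibility conditions of J\"unger and Schulz~\cite{Juenger09} --- where the paper tersely says the kernel ``can be solved by enumerating all subsets of its vertices.''
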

  We note that this result also holds in the non-sunflower case of~\SEFE.

  \subsection{Feedback Edge Set Number}
  \label{sec:fesunion}
  
  In this section, we consider the feedback edge set number $\psi = \fes(\GU)$ of the union graph.
  We build on ideas of Binucci et al.~\cite{Binucci24} in their parameterization of the problem \textsc{StoryPlan}. Given
  a minimum feedback edge set $F$ of \GU, our goal is to bound the number of vertices of $H \coloneqq \GU - F$ using reduction rules. Since $F$ is minimal and we can assume \GU to be
  connected, $H$ must be a tree.
  
  Recall that we can even assume the union graph \GU to be biconnected due to the preprocessing by Bläsius et
  al.~\cite{BlasiusKR18}. Since the preprocessing simply decomposes
  split components around cutvertices of the union graph into independent instances, the feedback edge set number of the
  graph does not increase. The biconnectivity of \GU ensures that \GU does not contain degree-1 vertices and therefore
  every leaf of $H$ is incident to an edge of $F$ in \GU. This allows us to bound the number of leaves (and
  consequently also the number of nodes of degree at least~3) of $H$ linearly in $|F| = \psi$.
  
  It thus only remains to limit the number of degree-2 vertices in $H$. 
  Let an \emph{$l$-chain} of \GU denote a maximal
  path consisting of $l+2$ vertices, where each of its $l \geq 1$ inner vertices has degree~2. Note that every degree-2
  vertex of a graph is contained in exactly one of its $l$-chains. To bound the number of degree-2 vertices in $H$, we
  first restrict the length of all $l$-chains in \GU. 

  \begin{definition}
    \label{def:redundant}{}
    An edge $e$ contained in an $l$-chain $c$ of \GU is called \emph{redundant} if
$e$ is a shared edge, or if
      $e$ is an \circled{i}-exclusive edge for some $i \in [k]$, such that $c$ also contains a different 
      \circled{i}-exclusive edge $e'$.
\end{definition}
  
  With the following lemma, we show that redundant edges contained in an $l$-chain of \GU can be safely contracted as
  illustrated in \Cref{fig:FESUnion}.

  \begin{figure}[t]
    \centering
    \begin{subfigure}[t]{0.45\textwidth}
      \centering
      \includegraphics[scale=1,page=1]{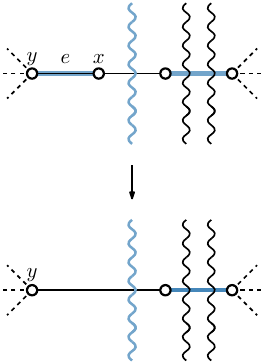}
      \caption{}
    \end{subfigure}
    \hfill
    \begin{subfigure}[t]{0.45\textwidth}
      \centering
      \includegraphics[scale=1,page=2]{RR-FES}
      \caption{}
    \end{subfigure}
    \caption{Contraction of a shared (a) or a \circled{1}-exclusive (b) edge $e$ in a $2$-chain.
      Because the chain contains an additional \circled{1}-exclusive edge $e'$ in (b), the endpoints $x$
      and $y$ of $e$ are contained in different connected components of all other input graphs. The
      \circled{2}-exclusive edge $g$ can therefore be rerouted as illustrated.}
    \label{fig:FESUnion}
  \end{figure}
  \begin{lemma}
    \label{lm:redundant}
    Let $e$ be a redundant edge of \GU and let $\hatGU$ denote the graph obtained from \GU by contracting $e$. Then
    \GU admits a simultaneous embedding if and only if $\hatGU$ does.
  \end{lemma}
  \begin{proof}
    Let $c$ denote the unique $l$-chain of \GU that $e$ is contained in and let $h$ be an edge of~$c$ connected to $e$
    via a degree-2 vertex $x$. Given a simultaneous drawing $\hat{\Gamma}$ of~$\hatGU$, subdivide edge $h$ once and
    change the type of the new edge $h_x$ incident to $x$ to the type of $e$ (i.e., either \circled{i}-exclusive or
    shared). Since $h_x$ can be drawn arbitrarily
    short, we can ensure that no edge crosses $h_x$ and we thus obtain a simultaneous drawing of \GU.
    
    To show the other direction, assume that \GU admits a simultaneous embedding $\mcEU = (\mcEexcl{1}, \dots,
    \mcEexcl{k})$ and let $\hatmcEU = (\hatmcEexcl{1}, \dots, \hatmcEexcl{k})$ denote the embedding obtained after
    contracting edge $e$ in \GU.
    
    We first show that every embedding $\hatmcEexcl{j}$ is planar for $j \in [k]$. To this end, we need to differentiate
    whether $e$ is contained in \G{j} or not. If $e$ is contained in \G{j}, then contracting $e$ in \GU is also a simple
    edge contraction in \G{j}, which retains the planarity of the embedding~\mcEexcl{j}. If $e$ is not contained in \G{j},
    then $e$ must be $\circled{i}$-exclusive for some $i \neq j$. The endpoints of $e$ are thus non-adjacent in \G{j} and
    contracting $e$ in \GU could lose the planarity of \mcEexcl{j} in the general case. However, by \Cref{def:redundant},
    $c$ contains a different \circled{i}-exclusive edge $e'\neq e$, thus the two endpoints of $e'$ are not connected in~\G{j}. Since $e$ and $e'$ are part of  the same $l$-chain, this means that the endpoints of~$e$ belong to different
    connected components in \G{j}, which guarantees planarity of
    \hatmcEexcl{j} after the contraction; see \Cref{fig:FESUnion}.
    
    Note that the contraction does not affect the consistency of edge orderings between the exclusive embeddings, it
    therefore only remains to argue the consistency of relative positions. Contracting a single edge $e$ in an $l$-chain
    $c$ of \GU can only violate the consistency of relative positions if $e$ is exclusive to some input graph and all
    other edges of $c$ are shared (otherwise the faces of the shared graph are unaffected). In this case, all edges of $c$ are incident to a single face of the shared graph before
    the contraction, but possibly incident to two separate faces after the contraction. However, by \Cref{def:redundant},
    such an edge $e$ is not redundant. This problematic case can therefore not occur and thus $\hatmcEU$ also has
    consistent relative positions and is consequently a simultaneous embedding of $\hatGU$.
  \end{proof}
  
  Using \Cref{lm:redundant}, we obtain the following reduction rule, allowing us to shorten~$l$-chains of \GU and thus
  reducing the number of degree-2 vertices.
  \begin{redrule}
    \label{rr:union-fes-chains}
    If $\GU$ contains a redundant edge $e$, then contract $e$ to obtain the graph $\hatGU$ and reduce the instance
    to $(\hatGU, \psi)$
  \end{redrule}
  \begin{proof}[Proof of Safeness]
    By \Cref{lm:redundant}, \GU and $\hatGU$ are equivalent.
If $e$ is contained in a minimum feedback edge set $F$, replace $e$ with any other edge of the $l$-chain $c$ that
    $e$ is contained in. Observe that this yields a minimum feedback edge set of the graph $\hatGU$, thus the
    parameter~$\psi$ does not change.
  \end{proof}
  
  After exhaustively applying \Cref{rr:union-fes-chains}, every $l$-chain $c$ of \GU with $l \geq 2$ contains no shared
  edge and no duplicate \circled{i}-exclusive edges. Since every edge of \GU is either shared or
  \circled{i}-exclusive, $c$ contains at most $k$ edges. Therefore, we have $l \leq k - 1$ for every
  $l$-chain in \GU. This yields an upper bound for the number of degree-2 vertices in $H$, resulting in a kernel for
  the problem.
  \begin{theorem}
    \SEFE is FPT with respect to the number of input graphs $k$ plus the feedback edge set number $\psi$
    of the union graph \GU and admits a kernel of size $O(k\psi)$.
  \end{theorem}
  \begin{proof}
    Let \GU denote the instance obtained after exhaustively applying \Cref{rr:union-fes-chains} and let $F$ denote a
    minimum feedback set of \GU. Because the reduction rule does not affect the parameter, we have $|F| = \psi$. Let $H
    \coloneqq \GU - F$ be the tree obtained by removing the edges in $F$ from \GU. For any leaf $v$ of $H$, observe that
    $v$ must be incident to an edge of $F$ in~\GU, because otherwise \GU would contain a degree-1 vertex, contradicting
    its biconnectivity. Because an edge of $F$ can be incident to at most two such leaves in \GU, it follows that $H$
    overall has at most $O(\psi)$ leaves. Since $H$ is a tree, it also contains at most $O(\psi)$ inner vertices of degree
    3 or higher. It therefore remains to bound the number of degree-2 vertices, which are all contained in some $l$-chain
    of $H$. Consider an $l$-chain $c$ of $H$ with $l \geq k$. Because for every $l'$-chain in \GU it is $l' \leq k - 1$,
    there must be an edge of $F$ incident to one of the degree-2 vertices in $c$ in \GU. More specifically, at least
    $\lfloor \frac{l}{k} \rfloor$ degree-2 vertices in $c$ must be incident to an edge of~$F$. The two endpoints of every
    edge in $F$ can therefore each ``pay'' for at most $k-1$ additional vertices in an $l$-chain. The number of degree-2
    vertices of $H$ is consequently in $O(k\psi)$. Overall, we thus obtain a kernel of size $O(k\psi)$.

    Since a minimum feedback edge set of an undirected graph can be computed in linear time by computing an arbitrary
    spanning tree and since the kernel can be solved by enumerating all subsets of its edges, it immediately follows that \SEFE is FPT parameterized by $k$ plus the feedback edge set number $\psi$
    of \GU.
  \end{proof}

  \subsection{Twin Cover Number}
  \label{sec:twincoverunion}
  
  Finally, we show that \SEFE is para-NP-hard with respect to the twin cover number of the union graph \GU.
  To this end, we use the following theorem.
  
  \begin{theorem}
    When $k$ is part of the input, \SEFE is NP-complete, even if the union graph is a complete graph.
  \end{theorem}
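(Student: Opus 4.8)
The plan is to give a polynomial-time reduction that \emph{completes} the union graph of a known hard instance without changing its answer. Membership in NP is immediate: a simultaneous drawing is certified by a rotation system together with the relative positions of the shared components, and the Jünger--Schulz characterization lets us verify consistent edge orderings and consistent relative positions in polynomial time; this stays true when \GU is complete. For hardness I would start from a hard sunflower instance $(\G{1},\dots,\G{k})$ whose shared graph $G$ is a \emph{forest} (indeed a tree); such instances are NP-hard, as witnessed by the tree restriction announced in the introduction and derivable from \cite{Schaefer13}. The forest structure of $G$ is the feature we exploit: a forest has no cycles, so by Euler's formula every planar embedding of $G$ has exactly one face; hence every pair of vertices lies on a common face and no shared component can be nested inside another.

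The construction is as follows. Let $V$ be the common vertex set and let \GU be the union graph of the starting instance. For every non-edge $\{u,v\}\in\binom{V}{2}\setminus E(\GU)$, I would add one fresh input graph $H_{uv}\coloneqq G+\{u,v\}$, i.e.\ the shared forest together with the single extra edge $uv$. Since $G$ is a forest, $H_{uv}$ is planar (a forest plus one edge is planar). The edge $uv$ occurs only in $H_{uv}$, so it is exclusive to that graph and $H_{uv}\cap\G{i}=G$ for every original index $i$ and $H_{uv}\cap H_{u'v'}=G$ for all distinct non-edges; thus the enlarged family is still a sunflower instance with the \emph{same} shared graph $G$. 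After adding all $O(n^2)$ such graphs, every pair of vertices is either an edge of the original \GU or the exclusive edge of some $H_{uv}$, so the new union graph is the complete graph $K_{|V|}$. The reduction is clearly polynomial.

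For correctness I would argue equivalence of the two instances. The forward direction is trivial: restricting a simultaneous drawing of the enlarged instance to the original input graphs yields a simultaneous drawing of the original instance, since the drawing of $G$ is unchanged. For the backward direction, suppose the original instance has a simultaneous embedding; it fixes a rotation system \mcE of $G$ (consistent across all graphs) together with the relative positions of the shared components. I would reuse exactly this embedding of $G$ in every new graph $H_{uv}$ and route the exclusive edge $uv$ inside the unique face of $G$: because $G$ is a forest this face is the only one and both $u$ and $v$ lie on its boundary, so $uv$ can be inserted without crossing $G$ and without disturbing the cyclic order of the $G$-edges. The resulting embedding of $H_{uv}$ is planar and, since a forest admits no nesting of components, it trivially realizes the prescribed relative positions. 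Hence the combined family has consistent edge orderings and consistent relative positions and is a simultaneous embedding of the enlarged instance.

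The main obstacle is precisely this backward direction: in general one cannot extend a \emph{fixed} embedding of a planar shared graph by an arbitrary missing edge, since $u$ and $v$ need not share a face. This is exactly why the reduction is anchored on a hard instance with a \emph{forest} shared graph---the single-face property makes every missing edge insertable and makes the relative-position constraint vacuous, so the completion step is always realizable. A secondary point to pin down is that the starting hardness may indeed be assumed to have a forest (tree) shared graph and that adding $\Theta(n^2)$ input graphs keeps the instance polynomial; both are routine. Finally, since $K_{|V|}$ has twin cover number $0$ (every edge is a twin edge), this theorem immediately yields the claimed para-NP-hardness of \SEFE with respect to the twin cover number of \GU.
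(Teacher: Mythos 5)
Your proposal is correct and takes essentially the same approach as the paper: the paper likewise starts from a hard instance whose shared graph is a star (a tree, so your forest assumption is available---though the correct source is the reduction of Angelini et al.~\cite{Angelini15} behind \Cref{thm:hardStar}, not \cite{Schaefer13}), adds for every missing edge a fresh input graph consisting of the shared graph plus that single edge, and concludes equivalence because the acyclic shared graph leaves the new exclusive edges unconstrained. Your write-up merely makes explicit what the paper compresses into ``this does not restrict the rotation of the star,'' namely NP membership, the preserved sunflower structure, and the single-face/vacuous-relative-positions argument for inserting each edge.
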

  \begin{proof}
    We start with the instance \GU obtained from the reduction by Angelini et al.~\cite{Angelini15} where the shared
    graph is a star (the remaining structure of the instance is irrelevant). For every edge $e$ of the complement graph
    of \GU (i.e., for every edge that is missing in \GU), create a new input graph \G{e} and add $e$ to \G{e}. Note that
    $\G{e}$ only consists of the shared star and the additional edge $e$. As this does not
    restrict the rotation of the star, we thus obtain an equivalent instance.
  \end{proof}
  
  Note that, if \GU is a complete graph, all vertices have the same neighborhood and thus every edge is a twin edge,
  and consequently \GU has twin cover number~0.
  
  \begin{theorem}
    \label{thm:twincoverhard}
    When $k$ is part of the input, \SEFE is para-NP-hard with respect to the twin cover number~of~\GU.
  \end{theorem}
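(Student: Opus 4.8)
The plan is to read off para-NP-hardness directly from the preceding theorem, so that no new reduction is needed. Recall that a parameterized problem is para-NP-hard with respect to a parameter $p$ precisely when it remains NP-hard for some fixed constant value of $p$. Hence it suffices to exhibit a family of NP-hard \SEFE instances whose twin cover number is bounded by a constant, and then the theorem follows immediately from the definition.

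The previous theorem supplies exactly such a family: it establishes that \SEFE is NP-complete even when the union graph \GU is a complete graph. The only remaining point is to confirm that a complete union graph has bounded twin cover number, and in fact it has twin cover number $0$. If \GU is complete, then $N[u] = V(\GU) = N[v]$ for every pair of vertices $u, v$; in particular every edge $\{u, v\}$ satisfies $N[u] = N[v]$ and is therefore a twin edge. Consequently the empty set is already a valid twin cover, so the minimum twin cover has size $0$, independently of the size of the instance.

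Combining these two observations, the NP-complete instances produced by the reduction of the previous theorem all have twin cover number $0$, a constant. This witnesses that \SEFE remains NP-hard under the restriction that the twin cover number of \GU equals $0$, which is exactly para-NP-hardness with respect to this parameter. There is no genuine obstacle to overcome here, since all of the difficulty is already encapsulated in the preceding complete-graph reduction; the only subtlety is the purely definitional one, namely that para-NP-hardness requires only NP-hardness at a single constant parameter value, so a single constant-twin-cover family of hard instances is enough to conclude.
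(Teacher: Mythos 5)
Your proposal is correct and matches the paper's own argument exactly: the paper likewise derives para-NP-hardness directly from the complete-union-graph NP-completeness result, noting that in a complete graph every edge is a twin edge (all vertices share the same closed neighborhood), so the twin cover number is $0$. Nothing further is needed.
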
	
  
  Recall that the FPT algorithms from \Cref{sec:vc-union,sec:fesunion} also require the number of input graphs $k$ as a parameter.
  In contrast, \Cref{thm:twincoverhard} only holds for unbounded~$k$, since the reduction requires many input graphs.

  \section{Parameters of the Shared Graph}
  \label{sec:sharedParameters}
We now consider
  parameters of the shared graph $G$. In this case, finding safe reduction rules becomes significantly more involved.
  While we may assume that the union graph is biconnected~\cite{BlasiusKR18}, even isolated vertices or vertices of
  degree~1 of the shared graph may hold important information due to their connectivity in the union graph
  as the following result of Angelini et al.~\cite{Angelini15} shows.
\begin{theorem}[\cite{Angelini15}]
    \label{thm:hardStar}
    \SEFE is NP-complete for any fixed $k \geq 3$, even if $G$ is a star.
  \end{theorem}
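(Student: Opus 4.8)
The plan is to prove that \SEFE is NP-complete when $G$ is a star by reducing from a known NP-complete problem that captures cyclic ordering constraints. Since a star has a single high-degree vertex (the center) and all embedding freedom of the union graph is concentrated in the rotation of the edges around this center, the natural target problem is one whose hardness stems from simultaneously satisfying many cyclic-order constraints. I would reduce from \textsc{Betweenness} or, more directly, from a simultaneous constraint problem on cyclic orderings such as \textsc{Sunflower-SEFE} with a simpler shared structure; the cleanest route is to encode a \textsc{3-Partition}-free combinatorial problem like \textsc{(2+2)-SAT} or the ordering problem underlying the hardness of \SEFE for $k=3$.

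First I would set up the reduction so that $G$ is a star with center $c$ and leaves $\ell_1, \dots, \ell_m$. The key observation is that for a star, consistent relative positions are vacuous (a star is connected, so there is only one shared component), and the only embedding decision is the cyclic rotation of the shared edges $c\ell_j$ around $c$, which by requirement~(1) must be \emph{identical} in all $k$ input drawings. Thus the entire problem reduces to choosing one cyclic order of the leaves that can be simultaneously completed to a planar drawing in each of the three input graphs. Each exclusive graph $\G{i}$ then adds its own $\circled{i}$-exclusive edges among the leaves (and possibly through auxiliary vertices), and planarity of $\G{i}$ imposes constraints on which cyclic orders of the leaves are admissible. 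The task is to design these exclusive edge sets so that a common admissible cyclic order exists if and only if the source instance is satisfiable.

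The main technical step is the gadget construction: I would encode each variable of the source instance by a pair (or small group) of leaves whose relative cyclic position encodes a truth value, and encode each clause by exclusive edges in one of the three graphs that are planarly routable exactly when the incident literals are set consistently. Using three input graphs gives enough independent planarity constraints to force the needed logical structure, mirroring the technique of Schaefer~\cite{Schaefer13} and Angelini et al.~\cite{Angelini15}. I expect the \textbf{main obstacle} to be ensuring that the exclusive edges genuinely constrain the cyclic \emph{order} rather than merely the embedding of each $\G{i}$ in isolation: because the leaves are degree-1 in $G$, I must route exclusive edges so that their planarity in $\G{i}$ is equivalent to an interval or non-crossing condition on the fixed cyclic rotation at $c$. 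Careful use of auxiliary cycles or paths among the leaves, forcing them onto a prescribed face, is what translates planarity back into a pure cyclic-ordering constraint.

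Finally, I would verify both directions of correctness and the polynomial size of the construction. Membership in NP is immediate, since a simultaneous embedding can be certified by the common rotation at $c$ together with the planar embeddings of each $\G{i}$, all checkable in polynomial time. For hardness, the forward direction shows a satisfying assignment yields a common rotation admitting planar completions, and the backward direction shows any simultaneous embedding forces a consistent rotation encoding a satisfying assignment. The fact that the reduction works for every fixed $k \geq 3$ would follow by padding with $k-3$ trivial input graphs that contain only the shared star and impose no additional constraints, exactly as in the star-based argument sketched later in the paper for the twin cover hardness.
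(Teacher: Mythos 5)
You should first be aware that the paper does not prove this statement at all: it is imported verbatim from Angelini et al.~\cite{Angelini15}, and the paper only summarizes the underlying argument in \Cref{sec:maxdeg} --- a reduction from the NP-complete \textsc{Betweenness} problem~\cite{Opatrny79} in which the rotation of the star encodes the sought linear ordering and the exclusive edges enforce the triplets. Your outline agrees with that route in its key structural observations: for a spanning star, the relative-position condition is vacuous and the whole problem collapses to choosing a single cyclic order at the center that every \G{i} must planarly complete, and your padding argument with $k-3$ graphs consisting of the bare star correctly lifts hardness from $k=3$ to every fixed $k \geq 3$ (the same trick the paper uses in \Cref{sec:twincoverunion}). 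Membership in NP is also handled correctly.

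However, the step you yourself flag as ``the main obstacle'' is exactly where the proposal stops being a proof, and the SAT-style fallback does not fit: \textsc{Betweenness} has no variables or clauses, so encoding truth values in leaf pairs is the wrong shape of gadget. The missing idea is the clean translation between planarity and cyclic order: a tree has exactly one face in any planar embedding, and the facial walk of the star visits the leaves exactly in the rotation at the center, so the exclusive edges of \G{i} among the leaves are planarly routable if and only if no two of them alternate in the chosen cyclic order. This turns \SEFE with a star shared graph into (essentially) a partitioned book-embedding problem on three pages, with a dedicated separator leaf converting the cyclic order into a linear one --- precisely the equivalence Angelini et al.~exploit before encoding \textsc{Betweenness} triplets. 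With that in hand, the real work is designing triplet gadgets as alternation-forbidding edge pairs and, crucially, distributing them over exactly three input graphs so that gadgets sharing a graph do not interfere; none of this is carried out in the proposal. Two smaller inaccuracies: your claim that relative positions are vacuous holds only if the star spans the entire vertex set, so the ``auxiliary vertices'' you invoke would reintroduce isolated shared components and revive condition (2); and ``auxiliary cycles among the leaves forcing them onto a prescribed face'' live inside a single exclusive graph, hence constrain only that graph's embedding and cannot, by themselves, pin leaves to a face of the shared graph. As it stands, the proposal correctly reconstructs the known strategy in outline but leaves an unfilled hole where the gadgetry --- the actual content of the hardness proof --- should be.
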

  
  Since, for a star, the vertex cover number $\vc(G)$, the feedback edge set number $\fes(G)$, the number of connected
  components $\shc(G)$, and the number of cutvertices $\cv(G)$ are all constant, \Cref{thm:hardStar} already implies
  para-NP-hardness for each of these parameters and for all combinations of them; see \Cref{fig:Table1}. To obtain
  additional hardness results, one can take the instance resulting from \Cref{thm:hardStar}, duplicate every edge of the
  star, and subdivide each of the new duplication edges once. Since these new edges can be drawn directly next to their
  original version (or simply be removed from the drawing, if one wants to argue the reverse direction), the resulting
  instance is equivalent. In this instance, the parameter~$\Do$, the maximum number of degree-1 vertices adjacent to a
  single vertex in $G$, is zero, and the only cutvertex of $G$ is the center vertex, thus filling
  additional gaps in \Cref{fig:Table1}. We note that this
  construction increases $\fes(G)$ and~$\vc(G)$.
  \begin{theorem}
    \label{thm:hardCvDo}
    \SEFE is NP-complete for any fixed $k \geq 3$, even if $\Do = 0$ and $\cv(G) = cc(G) = 1$.
  \end{theorem}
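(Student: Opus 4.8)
The plan is to reduce from the NP-complete instance produced by \Cref{thm:hardStar}, in which the shared graph is a star with center $c$ and leaves $\ell_1,\dots,\ell_m$ and $k\ge 3$ is fixed; NP-membership is inherited from \Cref{thm:hardStar}. It thus suffices to give a polynomial-time transformation, leaving the number $k$ of input graphs unchanged, into an equivalent instance whose shared graph $G'$ satisfies $\Do=0$ and $\cv(G')=\shc(G')=1$. First I would, for every leaf $\ell_i$, introduce a fresh subdivision vertex $s_i$ and add the two edges $\{c,s_i\}$ and $\{s_i,\ell_i\}$ as \emph{shared} edges to every input graph, leaving all $\circled{i}$-exclusive edges untouched. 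This duplicates each spoke $\{c,\ell_i\}$ of the star and subdivides the copy once, so $G'$ becomes a windmill of $m$ triangles on vertex sets $\{c,\ell_i,s_i\}$, all glued at $c$.

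The parameter bounds then follow by inspection of $G'$: every $\ell_i$ now has degree $2$ (to $c$ and to $s_i$) and every $s_i$ has degree $2$, so $G'$ has no degree-1 vertex at all and hence $\Do=0$; the triangles share $c$, so $G'$ is connected and $\shc(G')=1$; and deleting any vertex other than $c$ keeps $G'$ connected, while deleting $c$ splits it into the $m$ edges $\{\ell_i,s_i\}$, so $c$ is the unique cutvertex and $\cv(G')=1$. This is consistent with the authors' remark that $\fes(G')$ and $\vc(G')$ grow under the construction, which is precisely why the statement is limited to $\Do$, $\cv$, and $\shc$.

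For equivalence I would argue both directions by a \emph{draw-alongside} argument. Given a simultaneous drawing of the original instance, the shared spoke $\{c,\ell_i\}$ is drawn identically in all input graphs; routing the subdivided copy $c,s_i,\ell_i$ in a thin strip immediately beside it, \emph{identically} in every drawing, adds no crossings and inserts $\{c,s_i\}$ next to $\{c,\ell_i\}$ in the rotation at $c$ (and symmetrically at $\ell_i$). Hence the consistent edge orderings and consistent relative positions of the original drawing are preserved and we obtain a simultaneous drawing of the modified instance. Conversely, from a simultaneous drawing of the modified instance, deleting all vertices $s_i$ and their incident edges from every input graph yields planar drawings of the original input graphs; since deleting edges and vertices can only remove, never create, violations of the two compatibility conditions of Jünger and Schulz, the result is a simultaneous drawing of the original instance.

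The step I expect to be the main obstacle is making the forward direction rigorous, namely verifying that the new \emph{shared} triangles impose no constraint beyond those already present. The concern is that each added triangle on $\{c,\ell_i,s_i\}$ is a shared cycle, which in a \SEFE partitions the plane and could in principle restrict the relative positions of other shared components or reorder the spokes at $c$. I would resolve this by making the thin-strip placement precise: because the copy hugs the original spoke and $s_i$ is a fresh degree-2 vertex, the triangle bounds an almost-degenerate face containing nothing in any input graph, closer to $\{c,\ell_i\}$ than any incident exclusive edge, so it neither separates any exclusive edge from $c$ nor changes the cyclic order of the original spokes, and the relative position of every shared component is unchanged. This confirms that no rotation of the original star is forbidden by the gadget, which is exactly the equivalence claimed.
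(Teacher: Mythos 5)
Your construction is exactly the paper's: duplicate each spoke of the star from \Cref{thm:hardStar}, subdivide the duplicate once, argue the forward direction by drawing the new shared path in a thin strip alongside its original spoke and the reverse direction by simply deleting the new vertices and edges. Your parameter checks ($\Do = 0$, $c$ the unique cutvertex, one shared component, $k$ unchanged) and your extra care that the sliver triangles impose no new rotation or relative-position constraints match, and indeed slightly elaborate, the paper's brief argument, so the proposal is correct.
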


  In the remainder of this section, we focus on several degree-related parameters for \SEFE. In addition to
  \Do, we consider the maximum degree $\Delta$ of the shared graph, the highest maximum degree
  $\Di$ among all input graphs, and the maximum degree $\DU$ of the union graph. Recall that $\Do \leq \Delta \leq \Di
  \leq \DU$. \Cref{thm:hardStar,thm:hardCvDo} do not prove hardness for the latter three parameters as the shared graph
  has high degree. To close this gap, we show in \Cref{sec:maxdeg} that \SEFE is
  NP-complete, even if the shared graph is a tree and the union graph has maximum degree 4. This
  proves the intractability of many combinations of parameters; see \Cref{fig:Table1}. Finally, we show that
  \SEFE is FPT when parameterized by $\vc(G) + \Do$ and by $\cv(G) + \Delta$ (\Cref{sec:shared-vc,sec:cvMaxDeg}), which settles
  the remaining entries~of~\Cref{fig:Table1}.

  \subsection{Maximum Degree}
  \label{sec:maxdeg}
  
  In this section we show that \SEFE is NP-complete, even if the instance has bounded degree. Angelini et
  al.~\cite{Angelini15} reduce from the NP-complete~\cite{Opatrny79} problem \textsc{Betweenness}, which asks for a linear
  ordering of a ground set $X$, subject to a set $\mathcal{T}$ of triplets over~$X$, where each triplet requires an
  element to appear between two other elements in the ordering. They use the rotation of a star in the shared graph to
  encode the linear ordering in the \SEFE instance, while exclusive edges enforce the triplets. Since this
  approach leads to a high-degree vertex, we use a different approach. We use small-degree vertices in the shared graph to
  encode the ordering of every three-element subset of $X$, while the input graphs ensure compatibility
  between these orderings. This results in a shared graph with small maximum degree, but the number of input graphs is
  not~constant. \begin{figure}[t]
    \centering
    \includegraphics{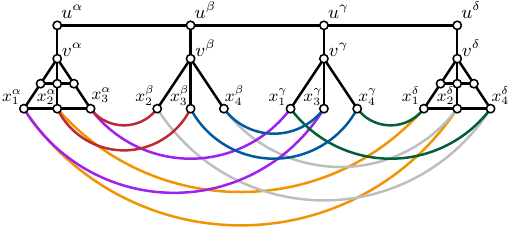}
    \caption{An example illustrating the reduction of \Cref{thm:npcMaxDeg4} with $X = \{x_1, x_2, x_3, x_4\}$ and
      $\mathcal{T} = \{(x_1, x_2, x_3), (x_1, x_2, x_4)\}$. Shared edges are drawn in black, every other color represents
      one input graph.}
    \label{fig:boundedDegree}
  \end{figure}
  \begin{theorem}
    \label{thm:npcMaxDeg4}
    When $k$ is part of the input, \SEFE is NP-complete, even if the shared graph is connected and has maximum degree 4.
  \end{theorem}
  \begin{proof}
    We reduce from the NP-complete~\cite{Opatrny79} problem \textsc{Betweenness}. Given a ground set $X = \{x_1,
    \dots, x_n\}$ and a set $\mathcal{T}$ of triplets over $X$, the problem asks whether there exists a linear ordering
    $\sigma$ of $X$, such that, for every triplet ${(a, b, c) \in \mathcal{T}}$, $b$ appears between $a$ and $c$
    in~$\sigma$.
    
    Given the instance
    ($X$, $\mathcal{T})$ of \textsc{Betweenness}, we construct an equivalent \SEFE instance~\GU as follows; see
    \Cref{fig:boundedDegree}.
    For every three-element subset \mbox{$\tau = \{x_i, x_j, x_k\} \in \binom{X}{3}$}, create a star in the shared graph $G$
    consisting of the center vertex $v^\tau$ adjacent to vertices $u^\tau$, $x_i^\tau$, $x_j^\tau$, and $x_k^\tau$; the
    latter three are called the \emph{element vertices}. Pick an arbitrary linear order $\alpha, \beta, \gamma, \dots$ for
    the elements of $\binom{X}{3}$ and add shared edges to build a path $u^\alpha, u^\beta, u^\gamma \dots$ in $G$.
    
    For every triplet $(x_i, x_j, x_k) \in \mathcal{T}$, let $\tau = \{x_i, x_j, x_k\} \in \binom{X}{3}$ denote the
    corresponding subset of $X$. Add the shared edges $x_i^\tau x_j^\tau$ and $x_j^\tau x_k^\tau$ to $G$. Additionally
    adding a subdivision vertex to each of the edges $u^\tau x_i^\tau$, $u^\tau x_j^\tau$, and $u^\tau x_k^\tau$ and
    connecting them in the same order creates a rigid structure that fixes the order of $x_i^\tau$, $x_j^\tau$, and
    $x_k^\tau$ as required by the triplet. We call these four edges the \emph{triplet-constraint edges}. Unless stated
    otherwise, we ignore the existence of these subdivision vertices to avoid case distinctions. This way, we can treat
    all element vertices the same, but we implicitly presume the fixed order for element vertices constrained by
    triplet-constraint edges, as required by the corresponding triplet.
    
    Finally, for every pair $\{\varphi, \omega\} \in \binom{X}{3}$ of distinct subsets sharing two elements ${\varphi \cap
      \omega = \{x_a,x_b\}}$, create a new input graph $G^{\varphi, \omega}$ in the \SEFE instance \GU and add the two
    exclusive edges $x_a^\varphi x_b^\omega$ and $x_b^\varphi x_a^\omega$, called \emph{pairwise consistency edges}, to
    $G^{\varphi, \omega}$. This finishes the construction, which clearly works in polynomial time and ensures that the
    shared graph $G$ is connected and has maximum degree 4. It remains to prove the correctness of the construction.
    
    If $(X, \mathcal{T})$ is a yes-instance, let $\sigma$ denote the corresponding linear ordering of $X$. For every $\tau
    = \{x_i, x_j, x_k\} \in \binom{X}{3}$, let $(x_i, x_j, x_k)$ be the linear order of $\tau$ induced by $\sigma$. Since
    $u^\tau$ is a cutvertex of degree at most 3 in $G$, its rotation system does not affect the planarity of $G$; we pick
    an arbitrary order for the incident edges of~$u^\tau$. Embed the vertices around $v^\tau$ in the clockwise cyclic
    order $u^\tau, x_i, x_j, x_k, u^\tau$. If these vertices are not constrained by triplet-constraint edges, then all
    neighbors of $v_\tau$ except one have degree 1, the chosen order thus retains planarity. Otherwise, since $\sigma$
    satisfies all triplets of $\mathcal{T}$, the endpoints of all triplet-constraint edges appear consecutively around the
    corresponding vertex $v^\tau$ and the triplet-constraint edges thus do not introduce crossings. Overall, we therefore
    have a planar embedding \mcE of the shared graph $G$. We extend this embedding to an embedding of each input graph.
    Let $\{\varphi, \omega\} \in \binom{X}{3}$ be two distinct subsets sharing two elements $\varphi \cap \omega = \{x_a,
    x_b\}$. Without loss of generality we assume $x_a <_\sigma x_b$. Due to our choice of \mcE, this means that
    $x_a^\varphi < x_b^\varphi < x_a^\omega < x_b^\omega$ in the clockwise order around the face $f$ of \mcE containing
    all element vertices. Adding the two pairwise consistency edges $x_a^\varphi x_b^\omega$ and $x_b^\varphi x_a^\omega$
    corresponding to the pair $\{\varphi, \omega\}$ to face $f$ of \mcE therefore retains the planarity of the
    corresponding input graph. We therefore have a simultaneous embedding of~\GU.
    
    Conversely, assume that \GU is a yes-instance. Let $\mcEU$ denote a simultaneous embedding of \GU and let \mcE be the
    corresponding embedding of the shared graph. For every $\tau = \{x_i, x_j, x_k\} \in \binom{X}{3}$, the clockwise
    cyclic order of the four neighbors of $v^\tau$ in $\mathcal{E}$ induces a unique linear order $\sigma_\tau$ of $\tau$
    (e.g., the order $u^\tau, x_i^\tau, x_j^\tau, x_k^\tau, u^\tau$ induces the linear order $x_i <_{\sigma_\tau}  x_j
    <_{\sigma_\tau} x_k$). Note that the pairwise consistency edges ensure that all element vertices lie in the same face
    $f$ of \mcE. Let $x_a, x_b \in X$ denote two distinct elements of the ground set~$X$. For every pair $
    \varphi,  \omega \in \binom{X}{3}$ of distinct three-element subsets with $\varphi \cap \omega = \{x_a, x_b\}$, the two
    corresponding pairwise consistency edges ensure that $x_a^\varphi, x_b^\varphi, x_a^\omega, x_b^\omega$ alternate in
    the clockwise cyclic order around $f$ because otherwise the two edges would cross. For the linear orders
    $\sigma_\varphi$ and $\sigma_\omega$ of $\varphi$ and $\omega$, this implies that $x_a <_{\sigma_\varphi} x_b$ if and only
    if $x_a <_ {\sigma_\omega} x_b$. To construct an order $\sigma$ of $X$, we therefore set $x_a <_\sigma x_b$ if and only if $x_a
    <_{\sigma_\tau} x_b$ for some (and thus for all) $\tau \in \binom{X}{3}$ with $\{x_a, x_b\} \subset \tau$. The local
    ordering of every three-element subset of $X$ thus corresponds to the order of the same three elements in $\sigma$.
    Since the three-element subsets guarantee transitivity, $\sigma$ is a linear ordering of $X$. For every triplet of
    $t \in \mathcal{T}$, the triplet-constraint edges incident to the corresponding element vertices ensure that $\sigma$
    satisfies~$t$. Therefore, $(X, \mathcal{T})$ is a yes-instance of
    \textsc{Betweenness}.
\end{proof}
  This result can be further improved by eliminating all cycles of the shared graph and bounding the maximum degree of the union graph; see \Cref{fig:treeUnionDeg4}.
  
  \begin{figure}[t]
    \centering
    \includegraphics{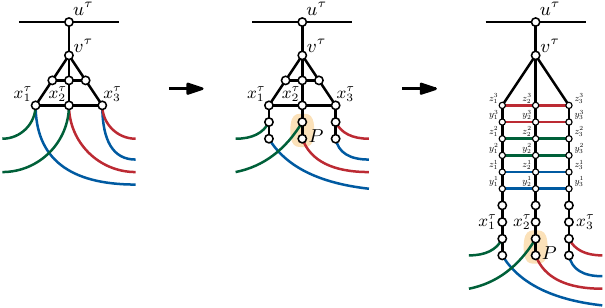}
    \caption{An example illustrating how the instance obtained from the reduction of \Cref{thm:npcMaxDeg4} can be
      modified such that each vertex has degree at most 4 in the union graph (first step) and the shared graph is acyclic
      (second step).}
    \label{fig:treeUnionDeg4}
  \end{figure}
  \begin{theorem}
    \label{thm:unionmaxdeg4}
    When $k$ is part of the input, \SEFE is NP-complete, even if the shared graph is a tree and the union graph has maximum degree 4.
  \end{theorem}
  \begin{proof}
    Let \GU be the \SEFE instance obtained by the reduction from \Cref{thm:npcMaxDeg4}; see \Cref{fig:boundedDegree} for
    an example. We first transform \GU into an equivalent instance that has maximum degree~4. Observe that only the
    element vertices of \GU can have degree higher than 4. For every $\tau \in \binom{X}{3}$ and for every $x_i \in
    \tau$, let $L$ denote the set of exclusive edges incident to $x_i^\tau$. Since $x_i^\tau$ is incident to at most
    three shared edges, it is $\deg(x_i^\tau) \leq |L| + 3$. In the shared graph, create a new path $P$ consisting of
    $|L|$ vertices and connect one of its endpoints to $x_i^\tau$; see the first step in \Cref{fig:treeUnionDeg4}.
    Disconnect all edges of $L$ from $x_i^\tau$ and connect them to the vertices of $P$ in such a way that each vertex
    of $P$ is incident to exactly one edge of $L$. Observe that $x_i^\tau$ and every vertex of $P$ now has degree at
    most 4 in the resulting instance \hatGU. It remains to show that \GU and \hatGU are equivalent.
    
    If \hatGU admits a simultaneous drawing $\hat{\Gamma^\cup}$, then contracting the shared path $P$ back into $x_i^\tau$
    clearly yields a simultaneous drawing of \GU.
    
    To show the other direction, assume that \GU admits a simultaneous drawing~$\Gamma^\cup$. As argued in the proof of
    \Cref {thm:npcMaxDeg4}, all element vertices and thus all pairwise consistency edges must be embedded in the same
    face $f$ of the shared graph in $\Gamma^\cup$. Draw path $P$ in face~$f$ next to $x_i^\tau$ and connect it to
    $x_i^\tau$. Since $L$ contains no two edges of the same input graph, the edges of $L$ may cross each other
    arbitrarily. We can therefore detach all edges of $L$ at $x_i^\tau$ and reattach them at the corresponding vertices of
    $P$ to obtain a simultaneous drawing of~\hatGU.
    
    Finally, we show that we can construct an equivalent instance $\hatGU_*$ where the shared graph is additionally
    acyclic. Observe that every simple shared cycle of $\GU$ (and consequently $\hatGU$) contains a triplet-constraint
    edge. Our goal is to replace the rigid structure induced by the triplet-constraint edges in the shared graph of
    $\hatGU$ with a rigid structure in every input graph. Let $\tau = \{x_i, x_j, x_k\} \in \binom{X}{3}$ be three
    elements fixed by a triplet $(x_i, x_j, x_k) \in
    \mathcal{T}$ and let $p$ denote the number of input graphs in \hatGU. To break all cycles, we first remove the
    triplet constraint edges and subsequently apply the following transformation to \hatGU; see the second step in
    \Cref{fig:treeUnionDeg4} for an example. For each $l \in [p]$, subdivide each of the edges $v^\tau x_i^\tau$, $v^\tau
    x_j^\tau$, and $v^\tau x_k^\tau$ twice. Let $(y_i^l, z_i^l)$, $(y_j^l, z_j^l)$, $(y_k^l, z_k^l)$ denote the respective
    pairs of subdivision vertices. Add the edges $y_i^l y_j^l$, $y_j^l y_k^l$, $z_i^l z_j^l$, and $z_j^l z_k^l$  as
    exclusive edges in the input graph \G{l}; we call these edges the \emph{exclusive triplet-constraint edges}. Note that
    these edges create a rigid structure in each input graph, fixing the order of the associated element vertices as
    required by the corresponding triplet.
    
    Assume that $\hatGU_*$ admits a simultaneous drawing $\hat{\Gamma_*}$. The exclusive triplet-constraint edges ensure
    the cyclic order $u^\tau < y_i^p < y_j^p< y_k^p < u^\tau$ or its reverse around $v^\tau$ in $\hat{\Gamma_*}$. Let
    $\hat{\Gamma}$ denote the drawing obtained from $\hat{\Gamma_*}$ by successively contracting the subdivision vertices
    back into $x_i^\tau$, $x_j^\tau$, and $x_k^\tau$ and subsequently removing the exclusive triplet-constraint edges. Note
    that contractions of shared edges retain the planarity of the drawing. Since these contractions do not alter the order
    of edges around $v^\tau$, it is $u^\tau < x_i^\tau < x_j^\tau < x_k^\tau$ or the reverse in $\hat{\Gamma}$. Since
    $v^\tau$ has no other incident edges, we can assume that the region spanned by the edges $x_i^\tau v^\tau$, $v^\tau
    x_j^\tau$, and the imaginary straight line $x_j^\tau x_i^\tau$ is not crossed by any edge in $\hat{\Gamma}$. Since the
    same holds for $x_j^\tau x_k^\tau$, we can simply draw the triplet-constraint edges of \hatGU as straight lines in
    $\hat{\Gamma}$ to obtain a simultaneous drawing of \hatGU.
    
    Conversely, given a simultaneous drawing $\hat{\Gamma}$ of \hatGU, the new subdivision vertices can be inserted
    directly next to $x_i^\tau$, $x_j^\tau$, and $x_k^\tau$ in $\hat{\Gamma}$ and the new edges can be drawn to take the
    same path as the (now removed) edges $x_i^\tau x_j^\tau$ and $x_j^\tau x_k^\tau$ in~$\hat{\Gamma}$. We thus obtain a
    simultaneous drawing of $\hatGU_*$.
  \end{proof}

  \subsection{Vertex Cover Number + Maximum Degree-1 Neighbors}
  \label{sec:shared-vc}
  
  Let $\Do$ denote the maximum number of degree-1 vertices adjacent to a single vertex in $G$
  and let $C$ denote a minimum vertex cover of $G$ with size $\varphi = \vc(G)$. In this section, we show that
  \SEFE is FPT with respect to $\varphi + \Do$.
  
  Since the shared graph is planar, the number of vertices with degree at least~3 in~$G$ is at most $3\varphi$ by
  \Cref{lm:planarDeg3}. Because $\Do$ bounds the number of degree-1 neighbors of each vertex in $G$, only the number of
  isolated vertices and degree-2 vertices in the shared graph remains unbounded. Unfortunately, it is difficult to reduce either of these
  vertex types in the general case.
  Instead, our goal is to enumerate all suitable embeddings of the shared graph.
Given the parameter $\varphi + \Do$, the only embedding choices we cannot afford to brute-force involve the P-nodes of
  $G$. Consider a set $U$ of at least three degree-2 vertices of $G$ that have the same neighborhood $\{u, v\}$. Note that
  neither of the vertices in $U$ has to be contained in the vertex cover $C$ of $G$ if both $u$ and $v$ are contained
  in~$C$, thus $|U|$ is not necessarily bounded by a function in~$\varphi + \Do$. In the SPQR-tree of $G$, $u$
  and $v$ are the poles of a P-node $\mu$ with at least $|U|$ parallel subgraphs and we therefore cannot afford to
  enumerate all permutations of~$\mu$. We call such a P-node \emph{two-parallel}. Even if we fix the embedding of $\mu$,
  the number of faces incident to its poles is not bounded, leaving too many embedding decisions if one of the poles of
  $\mu$ is a cutvertex in $G$; see \Cref{fig:two-parallel}.
  \begin{figure}[t]
    \centering
    \includegraphics[page=1]{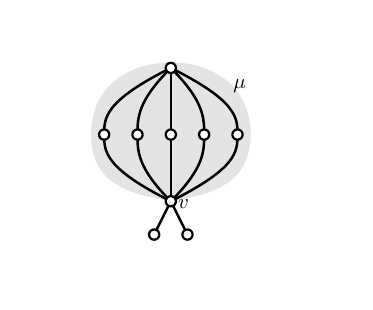}
    \caption{A two-parallel P-node $\mu$ with many possible embedding choices at cutvertex $v$.}
    \label{fig:two-parallel}
  \end{figure}

  To circumvent these issues, we show that we can limit the number of embedding decisions involving two-parallel P-nodes
  of $G$ using connectivity information of the union graph, allowing us to brute-force all suitable embeddings of each
  shared component in FPT-time. Subsequently, we can determine whether \GU admits a \SEFE with the given embeddings using
  the algorithm by Bläsius and Rutter \cite{Blaesius15} that solves \SEFE if every shared component has a fixed embedding.
  In other words, we use the bounded search tree technique where the branches of the search tree correspond to embedding
  decisions in the shared graph. We first branch for all possible embeddings of the blocks in $G$, and
  subsequently branch for all possible configurations of blocks around cutvertices.
  \subsubsection{Preprocessing Steps}
  To simplify the aforementioned problematic embedding decisions in the shared graph, we use three preprocessing steps
  introduced by Bläsius et al.~\cite{BlasiusKR18}\footnote{The proof of these preprocessing steps only considers the case
    of $k = 2$ input graphs. However, as the authors state in the introduction, all results extend to $k \geq 3$ input
    graphs in the sunflower case.}.

  First, we can decompose the split components of cutvertices of the union graph~\GU into independent instances and we can thus assume~\GU to be biconnected~\cite[Theorem~1]{BlasiusKR18}.
  The second preprocessing step guarantees that no separating pair of \GU separates a P-node of the shared graph~\cite[Lemma 4]{BlasiusKR18}.
  This connectivity information of \GU fixes the ordering of virtual edges of P-nodes in the shared graph up to reversal and thus heavily restricts the corresponding number of admissible embeddings~\cite[Theorem~2 + Lemma~5]{BlasiusKR18}.
  We remark that these preprocessing steps do not increase the parameter~$\varphi + \Do$.
  \begin{lemma}
    \label{lm:preprocessing}
    Given a \SEFE instance, we can compute in linear time an equivalent set of instances where each instance \GU satisfies the following properties:
    \begin{enumerate}[label=(\alph*)]
      \item\label{it:biconnected}  The union graph \GU is biconnected~\cite[Theorem 1]{BlasiusKR18}
      \item\label{it:sepPairUnion} If vertices $\{u, v\}$ are a separating pair of the union graph, then at most one of its split components contains a shared path between $u$ and $v$~\cite[Lemma 4]{BlasiusKR18}
      \item\label{it:PnodeBinary} The embedding of every P-node of the shared graph is a binary decision.~\cite[Theorem~2 + Lemma 5]{BlasiusKR18}
    \end{enumerate}
  \end{lemma}

  \subsubsection{Embedding the Blocks}
  We start by enumerating all suitable embeddings of the individual blocks in the shared graph~$G$. For this, it suffices
  to embed all P-nodes and R-nodes of the SPQR-forest of~$G$. Since R-nodes only allow a binary embedding decision and the
  same holds for all P-nodes in~$G$ by \Cref{lm:preprocessing}\ref{it:PnodeBinary}, we only need to bound the number these of P-nodes and
  R-nodes in our parameter. We do this with the following lemma.
  
  \begin{lemma}
    \label{lm:sumPQ}
    The number of P-nodes and R-nodes in the SPQR-forest of any planar graph~$G$ is in $O(\vc(G))$.
  \end{lemma}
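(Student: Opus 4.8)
The plan is to prove the bound block by block and then sum. Since the SPQR-forest of $G$ is the union of the SPQR-trees of the biconnected components (blocks) of $G$, and since a block that is merely a cycle contributes a single S-node and no P- or R-nodes, it suffices to bound, for each block $B$, the number of its P- and R-nodes by $O(|V_{\geq 3}(B)|)$, where $V_{\geq 3}(B)$ is the set of vertices of degree at least $3$ inside $B$, and then to show $\sum_B |V_{\geq 3}(B)| = O(\vc(G))$.

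First I would record two degree observations that tie P- and R-nodes to \emph{branch vertices}. If $\mu$ is an R-node, its skeleton is triconnected, so every skeleton vertex $w$ has at least three incident skeleton edges; the expansion graph of each of them contributes a distinct edge of $G$ at $w$, so $\deg_G(w) \geq 3$ and $w \in V_{\geq 3}(B)$. Likewise, both poles of a P-node lie on at least three internally disjoint paths and hence have degree at least $3$. Thus every vertex relevant to a P- or R-node is a branch vertex, and by \Cref{lm:planarDeg3} the total number of branch vertices of $G$ is at most $3\vc(G)$; all other vertices have degree at most $2$ or are isolated.

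Next I would reduce each block $B$ to a \emph{simple} planar graph $H$ of minimum degree $3$ by suppressing all degree-$2$ vertices and replacing each bundle of parallel edges by a single edge; this only alters S-nodes and parallel bundles, leaving the R-nodes and the non-bundle P-nodes of $B$ intact, and $V(H) = V_{\geq 3}(B)$. Each merged bundle accounts for exactly one P-node, and there are at most $|E(H)| \leq 3|V(H)| - 6$ of them. For $H$ I would then use the identity $\sum_\mu (|V_\mu| - 2) = |V(H)| - 2$, obtained by gluing all node skeletons along the virtual edges of the SPQR-tree. Since each R-node skeleton has $|V_\mu| \geq 4$ while each P-node skeleton has $|V_\mu| = 2$, summing over the R-nodes alone already gives at most $(|V(H)| - 2)/2$ R-nodes. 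P-nodes are invisible to this identity and must be bounded through the tree structure: since no two P-nodes are adjacent, every virtual edge at a P-node leads to an S- or R-node, and in a simple minimum-degree-$3$ graph each P-node has at least two such incidences, so $2p \leq \sum_{\nu \in S \cup R} \deg_T(\nu)$. Estimating $\sum_S \deg_T(\nu) \leq \sum_S |V_\nu|$ and $\sum_R \deg_T(\nu) \leq \sum_R |E_\nu| \leq 3\sum_R(|V_\nu| - 2)$, both of which are $O(|V(H)|)$ by the identity together with $s \leq |V(H)|$, yields $p = O(|V(H)|)$. Hence $B$ contributes $O(|V_{\geq 3}(B)|)$ P- and R-nodes.

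Finally I would sum over blocks, which is where the dependence must stay $O(\vc(G))$ rather than $O(n)$. Using that two distinct blocks share at most one vertex, a branch vertex lying in $j$ \emph{complex} blocks (those with a P- or R-node) is overcounted $j - 1$ times, and these overcounts correspond to edges of the block-cut tree, so their sum is at most the number of complex blocks. Since every complex block contains at least four branch vertices, that number is at most $\tfrac14 \sum_B |V_{\geq 3}(B)|$. Writing $\Sigma = \sum_B |V_{\geq 3}(B)|$, this gives $\Sigma \leq |N_3| + \tfrac14 \Sigma$, whence $\Sigma = O(|N_3|) = O(\vc(G))$ by \Cref{lm:planarDeg3}. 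The main obstacle is the P-node count: unlike R-nodes, P-nodes have two-vertex skeletons and contribute nothing to the vertex identity, so they can only be controlled via the tree structure after the minimum-degree-$3$ reduction; the second delicate point is precisely this block-overlap argument, which is what prevents the per-block bounds from summing to $O(n)$.
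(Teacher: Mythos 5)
Your proof is correct in outline but takes a genuinely different route from the paper's. The paper performs the same initial reduction you do --- suppress degree-2 vertices while keeping parallel edges (which preserves the P/R-node count), then merge each parallel bundle into a single edge, losing at most one P-node per bundle --- but it does so \emph{globally} on $G$ rather than block by block, and it finishes with one short counting observation: the resulting simple planar graph has $O(\vc(G))$ vertices by \Cref{lm:planarDeg3}, hence $O(\vc(G))$ edges, and since every inner node of an SPQR-tree has tree-degree at least $3$, the number of inner nodes is at most the number of Q-nodes (i.e., edges) minus two. That single observation replaces all three of your counting devices --- the skeleton-vertex identity $\sum_\mu (|V_\mu|-2) = |V(H)|-2$ for R-nodes, the no-two-adjacent-P-nodes tree-degree argument for P-nodes, and the block-cut-tree overcount argument needed to re-aggregate per-block bounds --- and, because it works globally with the SPQR-forest, it never encounters the block-overlap problem you have to solve in your last paragraph. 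Your version yields slightly finer information (separate bounds on R- and P-node counts) at the cost of substantially more bookkeeping.

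Two local claims in your write-up are false as stated, though both are repairable. First, a complex block need not contain four branch vertices: a theta graph (two vertices joined by three internally disjoint paths) is a single block with a P-node and exactly two vertices of degree at least $3$; replacing four by two gives $\Sigma \leq |N_3| + \tfrac12\Sigma$, which still yields $\Sigma = O(\vc(G))$. Second, merging bundles can create new degree-2 (even degree-1) vertices --- a chain of thetas collapses to a path --- so you cannot simultaneously have $V(H) = V_{\geq 3}(B)$ and minimum degree $3$ in $H$. Fortunately, the step where you invoke minimum degree $3$ needs only simplicity: in a simple graph a P-node has at most one real edge among its at least three skeleton edges, so at least two virtual edges lead to S- or R-nodes. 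You should therefore define $H$ by one pass of suppression followed by one pass of bundle-merging (so $V(H) = V_{\geq 3}(B)$, without the minimum-degree claim), bound the bundle count by $|E(H)|$ there, and note the degenerate cases $|V(H)| \leq 2$ where $|E(H)| \leq 3|V(H)|-6$ fails but $|E(H)| \leq 1$ suffices.
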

  \begin{proof}
    Let $\mathcal{T}$ denote the SPQR-forest of $G$ and let $\varphi = \vc(G)$ denote the vertex cover number of $G$. We
    let $\nu(\mathcal{T})$ denote the sum of P-nodes and R-nodes in $\mathcal{T}$. First construct a graph $G'$ and its
    corresponding SPQR-forest $\mathcal{T'}$ by exhaustively removing all vertices of degree at most~1 from $G$ and
    additionally contracting each degree-2 vertex into one of its neighbors. We keep the parallel edges that may emerge
    after these contractions, thus $G'$ may be a multi-graph. Because we keep the parallel edges, note that
    $\mathcal{T}$ and~$\mathcal{T'}$ contain the same number of P-nodes and R-nodes, i.e., $\nu(\mathcal{T}) =
    \nu(\mathcal{T'})$.
    
    Since $G'$ only contains vertices of degree at least 3, and since there are at most as many such vertices in $G'$ as
    there are in $G$, $G'$ contains $O(\varphi)$ vertices by \Cref{lm:planarDeg3}. Because the edge contractions also
    retain the planarity of $G$, the number of bundles of parallel edges in $G'$ is bounded by the number of edges a
    planar graph with $O(\varphi)$ vertices can have, which is also $O(\varphi)$.
    
    Finally, let $G^*$ denote the graph obtained from $G'$ by replacing bundles of parallel edges with a single edge and
    let $T^*$ denote the corresponding SPQR-forest. Since $G'$ contains $O(\varphi)$ bundles of parallel edges,
    replacing the parallel edges can eliminate at most $O(\varphi)$ P-nodes. Observe that replacing parallel edges does
    not affect R-nodes, thus $\nu(\mathcal{T'}) - \nu(\mathcal{T^*}) = \nu(\mathcal{T}) -
    \nu(\mathcal{T^*}) \in O(\varphi)$. Since $G^*$ contains $O(\varphi)$ vertices and no parallel edges, planarity
    ensures that it also contains at most $O(\varphi)$ edges. 
Note that the number of inner nodes in an SPQR-tree is upper-bounded by the number of edges in the corresponding biconnected component.
    Since every edge is contained in exactly one biconnected component, the SPQR-forest $\mathcal{T^*}$ therefore contains $O(\varphi)$ inner
    nodes. As argued above, it holds that $\nu(\mathcal{T}) -
    \nu(\mathcal{T}^*) \in O(\varphi)$ and consequently $\nu(\mathcal{T}) \in O(\varphi)$.
\end{proof}
  
  Combining \Cref{lm:preprocessing}\ref{it:PnodeBinary} and \Cref{lm:sumPQ} allows us to enumerate all suitable embeddings of all blocks in $G$ in
  FPT-time, as there are at most $2^{O(\varphi)}$ such embeddings.

  \begin{corollary}
    \label{cor:fixedBlocks}
    Among all simultaneous embeddings of \GU, there are at most $2^{O(\varphi)}$ distinct combinations of embeddings of all blocks in the shared graph $G$.
    These embeddings can be enumerated in FPT-time.
  \end{corollary}
  
  \subsubsection{Nesting Blocks around Cutvertices}
  
  Given a fixed embedding for every block of the shared graph $G$, it remains to fix the order of incident edges at all cutvertices of $G$. Since this order is fixed for cutvertices of
  degree~2 and the total number of vertices of degree at least 3 in $G$ is at most $3\varphi$ by \Cref{lm:planarDeg3}, the
  number of relevant cutvertices is also at most $3\varphi$. Because we have already fixed the embedding of all blocks in
  $G$, fixing the order of incident edges at a cutvertex $v$ boils down to enumerating all
  suitable orderings and nestings of the blocks containing $v$.
  
  \begin{figure}[t]
    \centering
    \begin{subfigure}[t]{0.47\textwidth}
      \centering
      \includegraphics[scale=0.85,page=1]{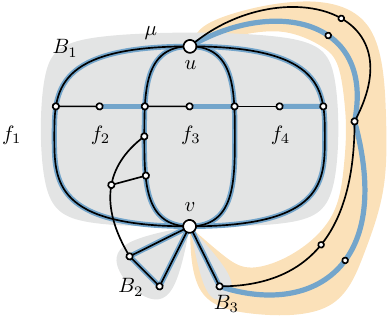}
      \caption{}
      \label{fig:CutvertexBlocksA}
    \end{subfigure}
    \hfill
    \begin{subfigure}[t]{0.47\textwidth}
      \centering
      \includegraphics[scale=0.85,page=2]{figures/MutableBlocks}
      \caption{}
      \label{fig:CutvertexBlocksB}
    \end{subfigure}
    \caption{(a) Three blocks $B_1$, $B_2$, and $B_3$ sharing the same cutvertex $v$. The block $B_1$ contains a
      two-parallel P-node $\mu$. Block $B_2$ can only be embedded in the faces $f_2$ and $f_3$ of $B_1$, thus it
      is a binary block with respect to $B_1$. The block $B_3$ is connected to the other pole $u$ of $\mu$ and can be
      embedded in all faces of $\mu$, thus it is a mutable block with respect to $B_1$. Note that $B_3$ is contained
      in a separate split component (highlighted in orange) with respect to the separating pair $\{u,v\}$ in the union
      graph, because $B_3$ is a mutable block. (b)~After assigning block $B_2$ to the face $f_3$ of $\mu$ (green arrow),
      $f_3$
      becomes occupied.}
    \label{fig:CutvertexBlocks}
  \end{figure}

  From now on, when we refer to faces of a block $B$ of $G$, we implicitly refer to the faces of the fixed embedding of
  $B$. For a cutvertex $v$ of $G$, let $B_1$ and $B_2$ denote two blocks of~$G$ containing $v$. Since $v$ may be a pole of
  a two-parallel P-node in $B_1$, the number of faces of $B_1$ that $B_2$ can be embedded into is in general not bounded
  by a function of $\varphi + \Do$; see \Cref{fig:CutvertexBlocksA}. We therefore cannot simply successively try to embed
  $B_2$ into every face of~$B_1$. However, since we may assume that the union graph \GU is biconnected
  (\Cref{lm:preprocessing}\ref{it:biconnected}), $B_2$ must be connected to $B_1$ in \GU via a path that is vertex-disjoint from $v$. This
  additional connectivity in \GU may already restrict the number of faces of $B_1$ that $B_2$ may be embedded into. If
  $B_2$ can be embedded into at most two faces of $B_1$, we say that $B_2$ is a \emph{binary block} with respect to $B_1$,
  otherwise it is a \emph{mutable block}; see \Cref{fig:CutvertexBlocks} for an example.
  
  If $B_2$ is a mutable block with respect to $B_1$, then $B_2$ must be connected to a single vertex $u$ of $B_1$ in $\GU
  - v$ and $\{u,v\}$ must be the poles of a P-node $\mu$ of $B_1$; see \Cref{fig:CutvertexBlocksA}. This
  also implies that $\{u, v\}$ is a separating pair in \GU and that $B_2$ and $B_1$ are contained in different split
  components with respect to~$\{u, v\}$. Let \SU denote the split component that $B_2$ is contained in and let \S{i}
  be the subgraph of \SU in \G{i} for all $i \in [k]$. Due to \Cref{lm:preprocessing}\ref{it:sepPairUnion}, we may assume that $\mu$ (and therefore $B_1$) is contained in a single split component with respect to $\{u,v\}$ in \GU.
  
  Our goal is to use the connectivity information of \SU to further restrict the number of faces of $B_1$ that \SU (and
  therefore $B_2$) can be embedded in. To this end, we adopt some notation from Bläsius et al.~\cite{BlasiusKR18}. Let $f$
  be a face of $\mu$ enclosed by the virtual edges $\varepsilon_1$ and~$\varepsilon_2$. We say that $f$ is
  \emph{$\circled{i}$-linked} for some $i \in [k]$, if there exists a path in $\G{i}  - \{u, v\}$ between $\exp (\varepsilon_1)$ and
  $\exp(\varepsilon_2)$ that is disjoint from $B_1$, except for its endpoints; see \Cref{fig:CommonEndsA}. Analogously, we
  say that the split component \SU is \emph{\circled{i}-connected}, if $u$ and $v$ are connected by a path in~\S{i}. For a
  set  $I \subseteq [k]$, we say that the face $f$ (resp. the split component \SU) is \emph{$I$-linked}
  (\emph{$I$-connected}), if $f$ is \circled{i}-linked (\circled{i}-connected) in \S{i} for all $i \in I$, but not for $i
  \in [k] \setminus I$; see \Cref{fig:CommonEndsB}. An $I$-linked face $f$ of $\mu$ is \emph{compatible} with a
  $J$-connected split component \SU if $I \cap J = \emptyset$. With the following lemma, we show that \SU must be
  compatible with some face of $\mu$, if $\GU$ is a yes-instance.
  
  \begin{figure}[t]
    \centering
    \begin{subfigure}[t]{0.38\textwidth}
      \centering
      \includegraphics[scale=1,page=1,trim={0.3cm 0 0.3cm 0}]{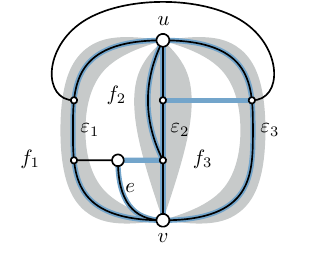}
      \caption{}
      \label{fig:CommonEndsA}
    \end{subfigure}
    \rulesep
\begin{subfigure}[t]{0.48\textwidth}
      \centering
      \includegraphics[scale=1,page=2]{figures/LinkInformation}
      \caption{}
      \label{fig:CommonEndsB}
    \end{subfigure}
    \rulesep
    \begin{subfigure}[t]{0.1\textwidth}
      \centering
      \includegraphics[scale=1,page=3,trim={2cm 0 1.5cm 0}]{figures/LinkInformation}
      \caption{}
      \label{fig:CommonEndsC}
    \end{subfigure}
    \caption{(a) A P-node $\mu$ of the shared graph with three virtual edges $\varepsilon_1$, $\varepsilon_2$, and
      $\varepsilon_3$. The faces $f_1$, $f_2$, and $f_3$ of $\mu$ are $\{2\}$-linked, $\emptyset$-linked, and $
      \{1\}$-linked, respectively. (b) Four split components with respect to the separating pair $\{u, v\}$ in \GU.
      The four components $S_1^\cup$, $S_2^\cup$, $S_3^\cup$, and $S_4^\cup$ are $\{1,2\}$-connected,
      $\{1\}$-connected, $\{2\}$-connected, and $\emptyset$-connected, respectively. $S_1^\cup$ is compatible with
      $f_2$, $S_2^\cup$ is compatible with $f_1$ and $f_2$, $S_3^\cup$ is compatible with $f_2$ and $f_3$, and
      $S_4^\cup$ is compatible with all faces of~$\mu$. (c) Although the split component \SU is compatible with
      the face $f_2$, \SU cannot be embedded into $f_2$, because the two shared edges $e$ and $g$ are
      conflicting.}
    \label{fig:CommonEnds}
  \end{figure}
  
  \begin{lemma}[{\cite[Lemma 5]{BlasiusKR18}}]
    \label{lm:compatible}
    Let \GU be an instance of \SEFE with a fixed embedding for every block of the shared graph. Let $\{u, v\}$ be a
    separating pair of \GU such that $u$ and $v$ are the poles of a P-node $\mu$ in $G$. If there exists a split
    component \SU with respect to $\{u, v\}$ in \GU that is not compatible with any face of $\mu$, then
    \GU is a no-instance.
  \end{lemma}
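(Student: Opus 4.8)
The plan is to argue by contradiction: assume \GU admits a simultaneous embedding $\mcEU = (\mcEexcl{1}, \dots, \mcEexcl{k})$ and derive a crossing in one of the exclusive embeddings from the fact that \SU is compatible with no face of $\mu$. Since the embedding of every block is fixed, the embedding of $\mu$ and hence its set of faces is the same in all $\mcEexcl{i}$, and each face of $\mu$ is bounded by two consecutive virtual edges. Because $\{u,v\}$ is a separating pair of \GU with \SU and the block $B_1$ containing $\mu$ in different split components, \SU attaches to the rest of \GU only at $u$ and $v$; consequently, in each $\mcEexcl{i}$ the subgraph \S{i} is drawn inside a single face of $B_1$.

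First I would show that \SU is drawn in one and the \emph{same} face $f$ across all exclusive embeddings, and that $f$ is a face of $\mu$. Since \GU is biconnected, \SU is connected to both $u$ and $v$, so the face containing it is incident to both poles and is therefore a face of $\mu$. Moreover, every vertex of \SU other than $u$ and $v$ belongs to a connected component of the shared graph contained in \SU (possibly a single isolated vertex); by consistency of relative positions, the face of $B_1$'s shared component into which each such component is embedded is the same in all $\mcEexcl{i}$. As \SU is connected and attaches only at $u$ and $v$, this fixes \SU inside one common face $f$ of $\mu$. Let $I,J \subseteq [k]$ be such that $f$ is $I$-linked and \SU is $J$-connected; since \SU is incompatible with $f$, we may fix an index $i \in I \cap J$.

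The heart of the proof is a forced crossing in $\mcEexcl{i}$. As \SU is \circled{i}-connected, \S{i} contains a $u$--$v$ path $P$, and since \S{i} lies in $f$, so does $P$. As $f$ is \circled{i}-linked, \G{i} contains a path $Q$ between $\exp(\varepsilon_1)$ and $\exp(\varepsilon_2)$ --- the virtual edges bounding $f$ --- that is internally disjoint from $B_1$; lying in a single face of $B_1$ incident to both of its endpoints and using that $\mu$ has at least three virtual edges, this face is necessarily $f$, so $Q$ lies in $f$ as well. Along the boundary cycle of $f$ the endpoints occur in the cyclic order $u$, (end of $Q$ on $\exp(\varepsilon_1)$), $v$, (end of $Q$ on $\exp(\varepsilon_2)$), so the chord $P$ (joining $u$ and $v$) and the chord $Q$ interleave; by the Jordan curve theorem they must cross inside $f$. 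Both $P$ and $Q$ belong to \G{i}, contradicting the planarity of $\mcEexcl{i}$; see \Cref{fig:CommonEnds}. Because every face of $\mu$ satisfies $I \cap J \neq \emptyset$ by assumption, no choice of $f$ avoids this conclusion, and \GU is a no-instance.

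I expect the main obstacle to be the first step, namely showing that consistency of relative positions pins \SU into a single face common to all exclusive embeddings; this is precisely what turns incompatibility with \emph{every} face (as opposed to incompatibility with one face per input graph) into a genuine obstruction. Once this is established, the crossing is a routine Jordan-curve argument, the only delicate point being the verification that the linking path $Q$ is confined to the same face $f$ as the connecting path $P$.
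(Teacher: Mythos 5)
Your proof is correct and follows essentially the same route as the paper's: the linking path is forced into the face $f$ alongside the $u$--$v$ path of \S{i}, yielding an unavoidable crossing in \mcEexcl{i}, and this rules out every face of $\mu$. The only difference is that you make explicit, via consistency of relative positions, why \SU occupies one common face of $\mu$ across all exclusive embeddings --- a step the paper's proof leaves implicit in ``it suffices to show that \SU cannot be embedded into a face $f$ of $\mu$ that it is not compatible with.''
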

  \begin{proof}
    Let \SU be a split component of $\{u, v\}$. It suffices to show that \SU cannot be embedded into a face
    $f$ of $\mu$ that it is not compatible with.
    
    Since \SU is not compatible with $f$, there exists an $i \in [k]$ such that \SU is \circled{i}-connected and $f$ is
    \circled{i}-linked. Let $\varepsilon_1$ and $\varepsilon_2$ denote the two virtual edges of $\mu$ that enclose $f$.
    Because $f$ is \circled{i}-linked, there exists a path $\excl{P}{i}$ in $\G{i} - \{u, v\}$ between $\exp
    (\varepsilon_1)$ and $\exp(\varepsilon_2)$ that is disjoint from the block of the shared graph containing $\mu$,
    except for its endpoints. Since $\excl{P}{i}$ may not cross any other virtual edges of $\mu$, $\excl{P}{i}$ must be
    embedded in face $f$ in \G{i}.
Since \SU is \circled{i}-connected and therefore contains a path $L$ connecting $u$ and $v$ in \G{i},
    embedding
    \SU into face $f$ always introduces a crossing between $L$ and $P$. Consequently, there exists no simultaneous
    embedding of \GU where \SU is embedded in $f$.
  \end{proof}
  From now on, we can assume by \Cref{lm:compatible} that \SU is compatible with at least one face of $\mu$. Note that
  compatibility of \SU with a face $f$ of $\mu$ does not necessarily guarantee that \SU can be embedded into $f$; see
  \Cref{fig:CommonEndsC}.
  
  \paragraph{Embedding Split Components Into Compatible Faces.}

  We now show how the notion of compatibility allows us to limit the number of faces of $\mu$ each split component of the
  union graph (and therefore each mutable block) can be embedded in. We start by giving a high-level overview of our
  strategy. Let $B_1$ denote the block of the shared graph that the P-node $\mu$ is contained in. First, we assign a face
  of $B_1$ to each block that is a binary block with respect to $B_1$; we will show that the number of possible
  assignments is bounded by a function in $\varphi + \Do$. If a face $f$ of $\mu$ subsequently contains such a block
  incident to a pole $u$ or $v$ of $\mu$, we say that $f$ is \emph{occupied}; see \Cref{fig:CutvertexBlocksB}. For each
  remaining mutable block $B_2$ with respect to $B_1$ that can be embedded into any face of $\mu$, recall that there
  exists a split component \SU of the separating pair $\{u, v\}$ in \GU that $B_2$ is contained in (see
  \Cref{fig:CutvertexBlocksA} for an example). Let $\mathcal{F}_{\SU}$ be the set of faces of $\mu$ that are compatible
  with \SU. We will show that we can always decompose \SU into an independent instance if $\mathcal{F}_{\SU}$ contains a
  face that is not occupied. Finally, all remaining embedding choices with respect to the P-node $\mu$ only concern faces
  of $\mu$ that are occupied. Since we can bound the number of occupied faces and the number of blocks in $G$, we will
  finally show that we can brute-force all remaining embedding choices of the shared graph.
  
  As the first step, we now fix the position of all binary blocks. For every cutvertex $v$ in $G$ and for each pair $B_1,
  B_2$ of blocks containing $v$, we create two new branches if $B_2$ is a binary block with respect to $B_1$. The two
  branches correspond to the (at most) two faces of $B_1$ that $B_2$ can be embedded into. Since we have at most
  $3\varphi$ cutvertices with degree at least 3 in $G$ (\Cref{lm:planarDeg3}) and since we have at most $\varphi + \Do$
  blocks in $G$ that contain $v$ (each block contains either a vertex of the vertex cover, or a degree-1 vertex adjacent
  to $v$), we create at most $\BO {(2^{(\varphi + \Do)^2})^{3\varphi}} =
  \BO{2^{(\varphi + \Do)^2 \cdot 3\varphi}}$ branches to assign the binary blocks to all possible faces.
  
  Now consider two blocks $B_1$ and $B_2$ both containing cutvertex $v$, such that $B_2$ is a mutable block with respect
  to $B_1$. Recall that there exists a P-node $\mu$ with poles $v$ and $u$ in $B_1$, such that $B_2$ is contained in a
  different split component \SU of the separating pair $\{u, v\}$ in the union graph \GU (see
  \Cref{fig:CutvertexBlocksA} for an example). For a face $f$ of $\mu$, recall that we say that $f$ is occupied, if there
  exists a binary block incident to $u$ or $v$ that is assigned to the face $f$; see \Cref{fig:CutvertexBlocksB} for an
  example. Note that we need to be cautious around an occupied face $f$, since $f$ may contain additional shared edges
  incident to $u$ and~$v$, thus the problematic case shown in \Cref{fig:CommonEndsC} can occur and we have no guarantee
  that we can embed \SU into $f$, even if \SU is compatible with~$f$. If, however, \SU itself admits a
  \SEFE and is compatible with a face $f$ of $\mu$ that is not occupied, we now show that we can always embed \SU
  (and all other split components of $\mu$ that are compatible with $f$) into~$f$.
  
  \begin{lemma}[{[}Derived from {\cite[Lemma 6]{BlasiusKR18}}{]}]
    \label{lm:unoccupiedFaces} 
    Let \GU be an instance of \SEFE with a fixed embedding for every block of the shared graph $G$ and an assignment of all binary blocks to a face.
    Let $f$ denote an unoccupied face of a P-node $\mu$ of $G$ whose poles
    $u$ and $v$ are a separating pair of the union graph \GU. Let $\mathcal{S} \coloneqq \{\SU_1, \dots, \SU_t\}$ be the
    set of split components with respect to $\{u, v\}$ in \GU that are compatible with $f$. Then \GU admits a \SEFE if
    and only if
    \begin{enumerate}
      \item $\GU_\mu \coloneq \GU - \mathcal{S}$ admits a \SEFE, and
      \item for each $\SU_i \in \mathcal{S}$, $\SU_i + uv$ admits a \SEFE, where $uv$ is a shared edge.
    \end{enumerate}
  \end{lemma}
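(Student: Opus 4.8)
The plan is to establish the two directions of the equivalence separately. The forward direction is a routine restriction argument, whereas the backward direction --- reassembling a simultaneous embedding of $\GU$ from the partial solutions of conditions~(1) and~(2) --- carries essentially all of the difficulty and is where the notion of compatibility is used.

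For the forward direction, I would start from a simultaneous embedding $\mcEU$ of $\GU$. Condition~(1) is immediate, since $\GU_\mu = \GU - \mathcal{S}$ is a subgraph of $\GU$ and deleting the vertices and edges of $\mathcal{S}$ from $\mcEU$ leaves a simultaneous embedding of $\GU_\mu$. For condition~(2) I would fix some $\SU_i \in \mathcal{S}$ and restrict $\mcEU$ to $\SU_i$. As $\{u,v\}$ is a separating pair and $\SU_i$ is one of its split components, the whole of $\GU$ outside $\SU_i$ lies, in every input graph, inside a single face of the induced drawing of $\SU_i$ whose boundary contains both $u$ and $v$; consistency of relative positions in $\mcEU$ guarantees that this is the same face in all input graphs. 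Routing the shared edge $uv$ through this common face then yields a simultaneous embedding of $\SU_i + uv$.

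For the backward direction, I would combine the given simultaneous embedding of $\GU_\mu$ with those of the graphs $\SU_i + uv$ by inserting each $\SU_i$ into the unoccupied face $f$ of $\mu$. For every $\SU_i$, I would first delete the shared edge $uv$ from its simultaneous embedding and regard the face formerly incident to $uv$ as outer, which exposes $u$ and $v$ on a common boundary consistently across all input graphs; this prepared drawing is then pasted into $f$, identifying its copies of $u$ and $v$ with the poles of $\mu$. The crucial check is planarity of the paste-in step per input graph, and this is exactly where compatibility enters. Fixing an input graph $\G{j}$, if $f$ is $\circled{j}$-linked then a linking path crosses $f$ and separates it into a $u$-side and a $v$-side; compatibility ($I \cap J = \emptyset$) forces every inserted $\SU_i$ to be non-$\circled{j}$-connected, so its restriction to $\G{j}$ decomposes into a part reaching $u$ and a part reaching $v$, which can be routed into the respective sides without meeting the linking path. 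If $f$ is not $\circled{j}$-linked, the region of $f$ in $\G{j}$ is a single face incident to both $u$ and $v$ and the whole restriction fits inside it. Finally I would verify that the rotations at $u$ and $v$ and the relative positions of all shared components agree across the input graphs, using that $f$ is unoccupied so that no further shared edges incident to $u$ or $v$ lie in $f$ and the obstruction of \Cref{fig:CommonEndsC} cannot arise.

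I expect the backward direction to be the main obstacle, and within it the simultaneous placement of all of $\SU_1, \dots, \SU_t$ together with the already-present linking paths inside the single face $f$: one has to choose the per-input-graph placements coherently so that the cyclic orders around $u$ and $v$ coincide in all input graphs and the relative positions of the shared components stay consistent. The unoccupied assumption and the pairwise disjointness implied by compatibility are precisely what make this possible; for the detailed bookkeeping I would adapt the corresponding argument of \cite[Lemma~6]{BlasiusKR18} to the present setting, where all split components compatible with $f$ are inserted at once.
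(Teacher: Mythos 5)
Your proposal is correct and follows essentially the same route as the paper: the forward direction by restriction, and the backward direction by inserting the compatible split components into the unoccupied face $f$, with the same per-input-graph dichotomy (a $\circled{j}$-connected part goes into a single face $\excl{f}{j}$ inside $f$ containing both $u$ and $v$, which compatibility guarantees to exist; otherwise the part is split between faces at $u$ and at $v$) and the same use of unoccupiedness to keep the rotations at $u$ and $v$ and the relative positions consistent. The coherence issue you defer to bookkeeping is resolved in the paper exactly by the adaptation of \cite[Lemma~6]{BlasiusKR18} you anticipate: the simultaneous embeddings of $\SU_1 + uv, \dots, \SU_t + uv$ are first glued into a single simultaneous embedding of $H^\cup + uv$ with $H^\cup = \SU_1 \cup \dots \cup \SU_t$ (possible since the shared edge $uv$ lets one assume $u$ and $v$ lie on the outer face of each $\SU_i$), and this one graph $H^\cup$ is then inserted into $f$ without changing its internal embedding, so the cyclic orders at $u$ and $v$ agree across the input graphs by construction.
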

  \begin{proof} The proof is very similar to the proof of Lemma 6 in the paper by Bläsius et al.~\cite{BlasiusKR18}.
    They show that the decomposition works if none of the split components contain a shared edge incident to $u$ or $v$.
    Since we are interested in split components containing mutable blocks incident to $u$ or $v$, we unfortunately
    cannot exclude this case. However, we only have to additionally show that the fact that $f$ is unoccupied allows us
    to also consider split components that contain a shared edge incident to $u$ or $v$. Note that we may assume that no
    split component of $\mathcal{S}$ contains a shared path between $u$ and $v$ (\Cref{lm:preprocessing}\ref{it:sepPairUnion}).
    \Cref{fig:UnoccupiedProof} gives an illustration of the proof.
    
    If \GU admits a \SEFE, then the corresponding simultaneous embedding can be easily decomposed into simultaneous
    embeddings of $\GU_\mu = \GU - \mathcal{S}$ and $\SU_i + uv$ for each $\SU_i \in \mathcal{S}$.
    
    \begin{figure}[t]
      \centering
      \includegraphics{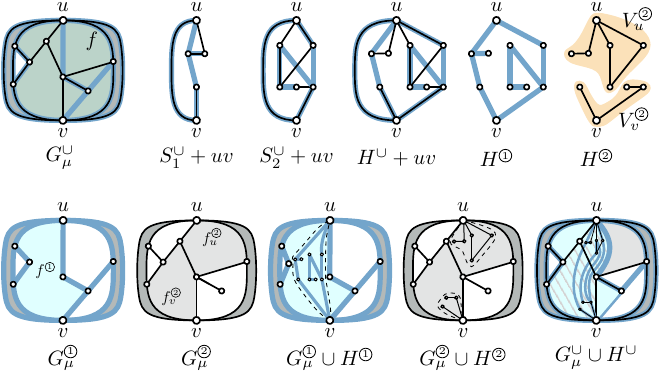}
      \caption{An example illustrating how split components $S^\cup_1$ and $S^\cup_2$ can be embedded in an unoccupied
        face $f$ of $G_\mu$ in \Cref{lm:unoccupiedFaces}. First, simultaneous embeddings of $S^\cup_1 + uv$ and
        $S^\cup_2 + uv$ are glued together to obtain a simultaneous embedding of $H^\cup + uv$. Because $H^\cup$ is
        \circled{1}-connected, there exists a face \excl{f}{1} of \excl{G_\mu}{1} containing both $u$ and $v$ where
        $\excl{H}{1}$ can be embedded in. Because $H^\cup$ is not \mbox{\circled{2}-connected}, the subsets $\excl{V_u}
        {2}$ and $\excl{V_v}{2}$ can be embedded into different faces $\excl{f_u}{2}$ and $\excl{f_v}{2}$ incident
        to $u$ and $v$, respectively. This yields a simultaneous embedding of $G^\cup_\mu \cup H^\cup = \GU$.}
      \label{fig:UnoccupiedProof}
    \end{figure}
    
    To show the other direction, assume that $\GU_\mu = \GU - \mathcal{S}$ and $\SU_i + uv$ (for each ${\SU_i \in
      \mathcal{S}}$) admit simultaneous embeddings respecting all embedding decisions for the shared graph we have made
    so far (embeddings of blocks and assignment of binary blocks to faces). We first glue the simultaneous embeddings of
    $\SU_1+uv,\dots,\SU_t+uv$ together in an arbitrary order to obtain a simultaneous embedding of $(S^\cup_1 \cup \dots
    \cup S^\cup_t) + uv$; see \Cref{fig:UnoccupiedProof} for an example. This is possible, because these graphs only
    share the vertices $u$ and $v$ and since each $S^\cup_i+uv$ contains the shared edge $uv$, we may assume that $u$
    and $v$ are on the outer face of each~$S^\cup_i$. For this reason, we can add $S_i^\cup$ to the outer face of
    $(S^\cup_1 \cup \dots \cup S^\cup_{i-1}) + uv$ to obtain a simultaneous embedding. We denote the resulting graph by
    $H^\cup := S^\cup_1 \cup \dots \cup S^\cup_t$ and the version of $H^\cup$ that additionally contains the shared edge
    $uv$ by $H^\cup + uv$.
    
    Since, by prerequisite of this lemma, every split component $\SU_i \in \mathcal{S}$ is compatible with face~$f$, the
    same also applies for $H^\cup$. Note that, in contrast to Lemma 6 in the paper by Bläsius et al.~\cite{BlasiusKR18},
    we explicitly allow $S^\cup_i$ (and therefore also $H^\cup$) to contain shared edges incident to $u$ and $v$. 
    However, we know that the simultaneous embedding of $\GU_\mu$ cannot contain shared edges incident to $u$
    and $v$ that are embedded into $f$, because (1) $f$ is unoccupied, thus no binary blocks have been assigned to $f$,
    and (2) all mutable blocks that are compatible with $f$ are contained in $\mathcal{S}$ and have thus been removed.
    In the following, we show that this allows us to embed $H^\cup$ into $f$, even if $H^\cup$ contains shared edges
    incident
    to $u$ or $v$, because the problematic case shown in \Cref{fig:CommonEndsC} can not occur.
    
    For every $j \in [k]$, for which $H^\cup$ is $\circled{j}$-connected (i.e., some split component $\SU_i$ is
    $\circled{j}$-connected), note that the compatibility criterion ensures that face $f$ is not \circled{j}-linked.
    This means that $\excl{G_\mu}{j}$ contains a face $\excl{f}{j}$ that is part of $f$ and contains both $u$ and $v$. 
    We embed $\excl{H}{j}$ into the face $\excl{f}{j}$ to obtain a planar embedding of $\excl{G_\mu}{j} \cup
    \excl{H}{j}$ (see $j = 1$ in \Cref{fig:UnoccupiedProof}).
    
    For every $j \in [k]$, for which $H^\cup$ is not $\circled{j}$-connected, first separate the vertices of $\excl{H}
    {j}$ into sets $\excl{V_u}{j}$ and $\excl{V_v}{j}$, such that $\excl{V_u}{j}$ contains all vertices of the connected
    component of $u$ and $\excl{V_v}{j}$ contains all remaining vertices. Pick arbitrary faces $\excl{f_u}{j}$ and~$\excl{f_v}{j}$ incident to $u$ and $v$ in $f$, respectively. We embed the vertices of $\excl{V_u}{j}$ and
    $\excl{V_v}{j}$ into $\excl{f_u}{j}$ and $\excl{f_v}{j}$, respectively, without changing the embedding of \excl{H}
    {i} (see $j = 2$ in \Cref{fig:UnoccupiedProof}).
    
    We therefore now have an embedding of $\GU_\mu \cup H^\cup = \GU$ where the embedding of each \G{i} is planar. It
    remains to show that it is a simultaneous embedding. Note that, in both of the above cases, we did not change the
    embeddings of $\GU_\mu$ and $H^\cup$. Therefore, all vertices except for maybe $u$ and $v$ have consistent edge
    orderings. Recall that the face $f$ is unoccupied and thus does not contain any shared edges incident to $u$ or $v$
    in~$G^\cup_\mu$. Since, for both vertices $u$ and $v$, we embedded all edges of each graph $\excl{H}{j}$ between the
    two shared edges belonging to face $f$, $u$ and $v$ therefore also have consistent edge orderings.
    Additionally, since all shared components of $H^\cup$ now belong to the face $f$ of the shared graph, and all shared
    components of $G_\mu^\cup$ lie on the outer face of $H^\cup$, the relative positions are also still consistent.
    Therefore, we obtain a simultaneous embedding of $\GU = G_\mu^\cup \cup H^\cup$.
  \end{proof}
  
  Consider once again a cutvertex $v$ of $G$ and two blocks $B_1$ and $B_2$ containing $v$, such that $B_2$ is a mutable
  block with respect to~$B_1$. Recall that this means that $B_1$ contains a P-node $\mu$ with poles $v$ and $u$, such that
  $B_2$ is contained in a split component $\SU$ of the separating pair $\{u,v\}$ in \GU. Let $\mathcal{F}_{S^\cup}$ denote
  the faces of $\mu$ that \SU is compatible with. \Cref{lm:unoccupiedFaces} now lets us assume that
  $\mathcal{F}_{S^\cup}$ only contains occupied faces of $\mu$. But since a face $f$ of $\mu$ is only occupied if we
  assigned a binary block incident to $u$ or $v$ to $f$, the number of occupied faces in $f$ is bounded by the number of
  blocks incident to $u$ and $v$. As argued before, there can be at most $\varphi + \Do$ blocks incident to a single
  vertex, thus $|\mathcal{F}_{S^\cup}| \leq 2(\varphi + \Do)$. We create a new branch for every face $f \in
  \mathcal{F}_{S^\cup}$ and assign the mutable block $B_2$ to $f$ in this branch. Thus, for every pair $B_1, B_2$ of
  blocks containing the same cutvertex $v$ we need at most $2(\varphi + \Do)$ branches to assign $B_2$ to every admissible
  face of~$B_1$. Since we have to do this for each of the $\BO{(\varphi + \Do)^2}$ pairs of blocks incident to a single
  cutvertex and since we have at most $3k$ cutvertices of degree at least~3, we need $\BO{((2(\varphi + \Do))^{
      (\varphi+\Do)^2})^{3\varphi}} =
  \BO{(2(\varphi + \Do))^{(\varphi+\Do)^2 \cdot 3\varphi}}$ branches to fix the nesting of the blocks at each cutvertex of
  $G$. Note that this bound also includes the branches we created to assign all binary blocks to their respective faces.
  
  \begin{corollary}
\label{cor:nesting} 
    Given a fixed embedding for every block of $G$, $\BO{(2(\varphi + \Do))^{(\varphi+\Do)^2 \cdot 3\varphi}}$ branches are sufficient to fix the nesting of the blocks around every cutvertex of $G$.
  \end{corollary}
  
  \subsubsection{Ordering Blocks Within the Same Face}
  
  We have now successfully fixed the nesting of the blocks at each cutvertex of $G$. Note, however, that there may be
  multiple blocks assigned to the same face $f$ of another block at a cutvertex $v$ of~$G$. In this case, we still need to
  determine the order of these blocks in the face~$f$. For this reason, we simply enumerate all possible orders of the
  blocks incident to every cutvertex of $G$ with degree at least 3. As argued before, a single vertex is contained in at
  most $\varphi + \Do$ blocks, hence there are at most $(\varphi + \Do)!$ orderings for the blocks around a single
  cutvertex. Since we have to do this at each of the at most $3\varphi$ cutvertices of degree at least 3
  (\Cref{lm:planarDeg3}), this creates an additional $\BO{((\varphi + \Do)!)^{3\varphi}}$ branches. Because every block of
  $G$ has a fixed embedding, we have now completely fixed the order of edges around each cutvertex of $G$.
  
  \begin{corollary}
    \label{cor:order} 
    $\BO{((\varphi + \Do)!)^{3\varphi}}$ branches are sufficient to fix the order of the blocks around
    every cutvertex of $G$.
  \end{corollary}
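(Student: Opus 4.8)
The plan is to observe that, after \Cref{cor:fixedBlocks} has fixed the embedding of every block and \Cref{cor:nesting} has fixed the nesting of the blocks around every cutvertex (i.e.\ assigned each block to a face of every other block with which it shares a cutvertex), the only embedding freedom left in the shared graph $G$ is the relative order of those blocks that have been placed into a common face incident to a common cutvertex. Fixing this order at every cutvertex completely determines the rotation system of $G$, hence a planar embedding of $G$, so it suffices to enumerate all orderings and bound their number.

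First I would restrict attention to cutvertices of degree at least $3$: a degree-$1$ vertex is never a cutvertex, and at a degree-$2$ cutvertex the two incident edges admit a unique cyclic order, so no decision is needed there. By \Cref{lm:planarDeg3} the number of vertices of degree at least $3$ in the planar graph $G$ is at most $3\varphi$, which bounds the number of relevant cutvertices. Next I would bound the number of blocks containing a fixed cutvertex $v$ by $\varphi + \Do$; this immediately yields at most $(\varphi+\Do)!$ orderings per cutvertex and, taking the product over the at most $3\varphi$ relevant cutvertices, the claimed $\BO{((\varphi+\Do)!)^{3\varphi}}$ additional branches.

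The block bound is the only nontrivial step, and I would prove it by a charging argument. Distinct blocks containing $v$ meet only in $v$, so I inject the set of blocks at $v$ into the set consisting of the degree-$1$ neighbors of $v$ together with $C \setminus \{v\}$: a pendant bridge to a degree-$1$ vertex is charged to that vertex (at most $\Do$ such), while every remaining block $B$ is charged to a vertex of $C$ different from $v$. If $B$ is $2$-connected it contains an edge both of whose endpoints differ from $v$, which $C$ must cover; if $B$ is a bridge $\{v,w\}$ with $w\notin C$, then, since $w$ has a further incident edge, that edge forces a cover vertex on the far side of the bridge. The main obstacle is precisely this degenerate bridge case: one must check that the cover vertices charged this way are pairwise distinct and distinct from those inside $2$-connected blocks. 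This holds because a bridge at $v$ separates its branch from the rest, so a cover vertex lying on the far side of one bridge cannot simultaneously lie on the far side of another bridge nor inside a block at $v$ containing $v$ — otherwise one would obtain a $v$-$w$ path avoiding the bridge edge, contradicting that it is a bridge. With injectivity established, the count of blocks at $v$ is at most $\varphi+\Do$, and the corollary follows by multiplying the per-cutvertex count over all relevant cutvertices.
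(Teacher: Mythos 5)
Your proof is correct and takes essentially the same route as the paper: restrict to the at most $3\varphi$ cutvertices of degree at least~3 via \Cref{lm:planarDeg3}, bound the number of blocks containing a cutvertex by $\varphi + \Do$, and multiply the at most $(\varphi+\Do)!$ orderings per cutvertex. The only difference is that you spell out (and, in the degenerate case of a bridge $\{v,w\}$ with $w \notin C$ of degree at least~2, carefully patch) the injective charging argument behind the block bound, which the paper dispatches with the one-line remark that each block contains either a vertex-cover vertex or a degree-1 neighbor of $v$.
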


  \subsubsection{Putting Things Together}
  
  Finally, every connected component of the shared graph $G$ now has a fixed embedding in every branch, thus we can use
  the algorithm by Bläsius and Rutter \cite{Blaesius15} to determine whether \GU allows a simultaneous embedding with the
  given embeddings. Recall that our algorithm first fixes the embedding of each block in $G$ and subsequently nests and
  orders all blocks around cutvertices of $G$. By combining \Cref{cor:fixedBlocks,cor:nesting,cor:order}, we need a total of $\BO{2^{\BO{\varphi}} \cdot (2(\varphi + \Do))^{(\varphi+\Do)^2 \cdot 3\varphi}
    \cdot ((\varphi + \Do)!)^{3\varphi}}$
  branches to enumerate all admissible embeddings of all connected components in $G$. Note that not all branches lead to a
  valid embedding of all connected components in $G$ (e.g., different nesting decisions can contradict one another). If
  this is the case, we reduce to a trivial no-instance in the corresponding branch.
  
  \begin{theorem}
    \label{thm:vcDeg1}
    When $k$ is part of the input, \SEFE is FPT with respect to the maximum number of \mbox{degree-1} neighbors $\Do$ and the vertex cover number
    $\varphi$ of the shared graph.
\end{theorem}

  \subsection{Number of Cutvertices + Maximum Degree}
  \label{sec:cvMaxDeg}
  
  In this section, we consider \SEFE parameterized by the number of cutvertices $\cv(G)$ and the maximum degree $\Delta$
  of the shared graph $G$. This combination of parameters allows us to brute-force all possible orders of incident edges
  at cutvertices of $G$ and to branch for each combination of such orders. Given these fixed orders, we could, in theory,
  enforce the orders in the shared graph by replacing every cutvertex $v$ in $G$ with a wheel that fixes the order of its
  incident edges. In the resulting instance, every connected component of $G$ is biconnected, which is a restricted case
  of \SEFE that Bläsius et al.~\cite{BlasiusKR18} showed to be polynomial-time solvable.
  
  Unfortunately, adding such a wheel around a cutvertex $v$ in the shared graph may introduce crossings with exclusive
  edges incident to $v$ and even cause some of the input graphs to be non-planar. We circumvent this issue as follows. We
  first subdivide every edge incident to a cutvertex $v$ once, and subsequently add an additional input graph $\G{l}$ with
  $l \coloneqq k + 1$ where we fix the order of edges around cutvertices by connecting the subdivision vertices to a cycle
  in the desired order. In other words, at every cutvertex of the shared graph, we insert a wheel in \G{l}. In every
  simultaneous embedding of the union graph \GU, the edges around every cutvertex $v$ of $G$ now satisfy the fixed order,
  up to reversal. We therefore say that $v$ has a \emph{fixed rotation} in $G$.
  
  While every cutvertex of $G$ now has a fixed rotation, the shared graph itself unfortunately still contains cutvertices
  and we therefore cannot directly use the algorithm by Bläsius et al.~\cite{BlasiusKR18}. However, we use a similar
  approach as they do. They show that, for a biconnected planar graph $G$ and a vertex set $X$ of $G$ that shares a face
  in some planar embedding of $G$, the order of $X$ along any simple cycle of $G$ is fixed up to reversal. This
  essentially allows them to decompose the instance at such cycles, as the ``interface'' the graph provides for the
  vertices of $X$ is independent from the embedding.
  
  Since facial cycles of non-biconnected graphs are not necessarily simple, the same approach does not generalize to non-biconnected graphs, because changing the rotation of cutvertices may also change
  the order of vertices in the cycle. In the following, we show that the statement also holds for non-biconnected graphs
  if every cutvertex has a fixed rotation. 
We say that a cycle $C$ of $G$ is \emph{face-embeddable}, if there exists a planar embedding of $G$ where $C$ is a facial cycle.
  
  \begin{lemma}[Derived from {\cite[Lemma 8]{BlasiusKR18}}]
    \label{lm:cycleFixedOrder}
    Let $G$ be a planar graph where every cutvertex has a fixed rotation. Let $X$ be a set of vertices that are
    incident to a common face in some planar embedding of $G$. Then the order of $X$ in any face-embeddable cycle of $G$
    containing~$X$ is unique up to reversal.
\end{lemma}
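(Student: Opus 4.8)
The plan is to reduce the statement to the biconnected case handled by Bläsius et al.~\cite[Lemma~8]{BlasiusKR18} by carefully tracking how a face-embeddable walk $W$ passes through cutvertices. Let $W$ be a face-embeddable walk of $G$ containing the vertex set $X$, and fix a planar embedding $\mcE$ of $G$ in which $W$ bounds a face $f$ and in which the vertices of $X$ are incident to a common face (the latter exists by assumption, but for the proof I want to work inside a single embedding, so the first step is to argue that we may assume $X$ lies on $f$ itself). The key structural observation is that, because $W$ is a \emph{facial} walk, it can only revisit a vertex $w$ if $w$ is a cutvertex: in a biconnected graph a facial walk is a simple cycle, whereas at a cutvertex the boundary of a face may touch the vertex several times, once for each block-``corner'' incident to that face.

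First I would decompose $W$ at its cutvertices. Each maximal subwalk of $W$ that stays within a single block $B$ is a simple path (or cycle) in $B$, and along such a subwalk the order of the $X$-vertices it contains is, by the biconnected result~\cite[Lemma~8]{BlasiusKR18}, determined up to reversal by $B$ and the fact that these vertices share the face $f \cap B$. So within each block the relative order of $X$ is rigid. The remaining freedom is in how these block-pieces are concatenated as $W$ threads from block to block through the cutvertices. Here the hypothesis that every cutvertex has a \emph{fixed rotation} $\sigma_v$ becomes essential: when $W$ enters a cutvertex $v$ along an edge $e$ and must leave along some edge $e'$, the facial walk is forced to continue along the edge immediately following $e$ in the rotation $\sigma_v$ (for a chosen traversal orientation of $f$), since the boundary of a face always turns to the next edge in the cyclic rotation. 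Thus the fixed rotation at $v$ uniquely determines how the incoming block-piece is glued to the outgoing block-piece, and it does so consistently for both global orientations of $W$ (the two orientations simply correspond to taking $\sigma_v$ versus its reversal, giving the overall ``up to reversal'').

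The core argument, then, is that once an orientation of the walk $W$ is chosen, the entire sequence of block-pieces and their internal $X$-orders is forced: the starting block-piece fixes an order up to reversal, its orientation is pinned down once we commit to a traversal direction, and at each cutvertex the rotation $\sigma_v$ dictates the unique continuation, so there is no branching. Reversing the global orientation reverses the whole concatenated order simultaneously, which yields exactly the ``unique up to reversal'' conclusion for the order of $X$ along $W$. I would present this as: any two face-embeddable walks containing $X$ induce the same cyclic structure because both the per-block orders (rigid by the biconnected lemma) and the cutvertex transitions (rigid by fixed rotations) coincide, hence the induced linear orders on $X$ agree up to reversal.

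The main obstacle I anticipate is making the cutvertex-gluing argument fully rigorous, in particular ruling out the possibility that a different face-embeddable walk could route through the \emph{same} cutvertex in a genuinely different way while still being consistent with the fixed rotation. The subtlety is that a cutvertex incident to many blocks may be visited by the facial walk multiple times, and I must argue that the correspondence between ``the incoming edge'' and ``the next edge in $\sigma_v$'' is unambiguous at each individual visit — essentially that a fixed rotation together with facial-walk-traversal rules determines a unique ``next-edge'' function. I would handle this by appealing to the standard combinatorial-embedding fact that a rotation system together with the next-edge (face-tracing) permutation completely determines all facial walks, so fixing $\sigma_v$ at every cutvertex fixes the face-tracing behaviour at $v$; combined with the rigidity inside blocks this removes all remaining freedom. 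The rest is routine bookkeeping to translate this into a statement about the order of $X$.
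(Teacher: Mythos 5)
Your reduction has the right flavour, but the gluing step contains a genuine gap — and it is exactly the one you flag at the end without resolving. Your central ``no branching'' claim, that once an orientation is chosen the fixed rotation $\sigma_v$ at each cutvertex dictates the unique continuation and hence the entire sequence of block-pieces is forced, is false. Face-tracing is determined by the rotation system at \emph{all} vertices, and only the cutvertices have fixed rotations; inside a block the embedding (R-node flips, P-node orderings) still varies, so two face-embeddable walks $W_1$ and $W_2$ may be genuinely different walks: they can enter a cutvertex along different edges, traverse different facial cycles of the same block, visit different vertices, and detour through different blocks (e.g.\ blocks containing no vertex of $X$ at all). Hence ``the incoming edge'' at a visit of $v$ is not an invariant shared by the two walks, and your proposed fix — that a rotation system together with the face-tracing permutation determines all facial walks — begs the question, because the rotation system is only partially fixed. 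What actually needs proving is a cross-walk statement: that both walks are forced through the same cutvertices and pick up the same $X$-vertices between corresponding cutvertex visits; the walks themselves need not coincide, only the induced orders on $X$, and nothing in your proposal establishes this identification. This is the real content of the lemma, not routine bookkeeping.

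The paper closes exactly this gap with a different device, and notably never applies \cite[Lemma~8]{BlasiusKR18} block by block. From each walk $W_i$ it builds a skeleton: take the subgraph induced by $W_i$, contract every degree-$\le 2$ vertex outside $X$ into a neighbor, and discard edges the reduced walk does not visit. It then proves two facts: (1) every vertex of degree at least $3$ in the skeleton is a cutvertex of $G$, since for a non-cutvertex $v$ the endpoints of three walk-edges at $v$ remain connected in $G - v$, contradicting that the walk is facial; and (2) both walks yield the \emph{same} skeleton $H$, by a connectivity argument — every split component of a skeleton cutvertex contains a vertex of $X$, so that cutvertex is an unavoidable separator of $X$ and must appear in any walk containing $X$. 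Since all branch vertices of $H$ are cutvertices of $G$ and thus have fixed rotations, the embedding of $H$ is fixed up to mirroring, and the order of $X$ is read off from $H$ directly. If you wanted to rescue your block-decomposition plan, you would need an analogue of (2) — that the cyclic sequence of block-pieces and the assignment of $X$-vertices to pieces agree for both walks — which is precisely the step your argument assumes rather than proves.
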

  \begin{proof}

    We prove the statement via induction on the number of blocks in $G$.
    As a base case,~$G$ only consists of a single block and is therefore biconnected.
    Since a face of a biconnected graph always induces a simple facial cycle, every face-embeddable cycle of $G$ is simple and therefore the statement follows from the lemma by Bläsius et al.~\cite[Lemma 8]{BlasiusKR18}.
    
    Now consider the case where $G$ consists of $k$ blocks.
    Let $C$ be an arbitrary face-embeddable cycle of $G$.
    Pick a block $B$ of $G$ that is only incident to a single cutvertex~$v$, let $G' \coloneqq G - (B - v)$ be the subgraph of $G$ obtained after removing $B$ (but not $v$) and let $C'$ and $C_B$ be the cycles induced by $C$ in $G'$ and $B$, respectively.
    Since $G'$ and $B$ both contain less than $k$ blocks, the order in which $C'$ and $C_B$ visit the vertices of $X \cap C'$ and $X \cap C_B$, respectively, is unique up to reversal by the inductive hypothesis.
    Note that, in any planar embedding of a graph, the order in which a facial cycle visits the edges incident to a single cutvertex $v$ is unique up to reversal.
    Since $v$ has a fixed rotation in our case, this order is unique (up to reversal) among all planar embeddings of $G$.
    Because $C'$ and $C_B$ each visit at least two edges incident to $v$, the order in which $C$ visits the vertices of $X$ in $G$ is therefore unique up to reversal, which concludes the proof.
  \end{proof}
  
  Note that a necessary condition for planarity is that the edges of two distinct blocks containing the same cutvertex $v$
  may not alternate around $v$. If the fixed rotation of $v$ violates this condition, we report a no-instance in the
  corresponding branch. Roughly speaking, the following \Cref{lm:cvDecomposition} tells us that different split components
  of a cutvertex with fixed rotation interact very nicely. The fixed rotation of $v$ already determines the nesting of its
  split components, and the connectivity in the union graph determines the set $X$ of \emph{attachment vertices} of a
  split component $S_1$ that must share a face in $G$ with a different split component $S_2$ in every simultaneous
  embedding; these are the vertices of $S_1$ that are connected to $S_2$ in $\GU - v$ via a path that is vertex-disjoint
  from $S_2$ except for its endpoints. By \Cref{lm:cycleFixedOrder}, the order of the attachment vertices $X$ on this
  facial cycle of $S_1$ is essentially unique and independent from the embedding of $S_1$. The vertices of $X$ thus
  essentially act as a fixed interface between $S_1$ and $S_2$. If the subgraph of
  \GU induced by $S_1$ admits a any simultaneous embedding where all vertices of $X$ lie on the outer face, we can
  thus subsequently assume that $S_1$ has this fixed embedding in \GU. We will use this observation to decompose
  split components at cutvertices of the shared graph to receive smaller instances of \SEFE.
  
  Let $v$ be a cutvertex of $G$ with a fixed rotation system $\sigma_v$. A \emph{block-preserving cut} of $v$ partitions
  the vertices of $G - v$ into two non-empty sets $V_1$ and $V_2$, where (i) the edges incident to $v$ that are also
  incident to a vertex of $V_1$ are consecutive in $\sigma_v$ and (ii) for every block $B$ of $G$, the vertices of $B$
  (ignoring $v$) are either all contained in $V_1$ or all contained in $V_2$; see \Cref{fig:Decomposition} for an example.
  We remark that the choice of $V_1$ and $V_2$ is in general not unique.
  Note that $\sigma_v$ induces a nesting for all split components of $v$ contained in $V_1$ (respectively $V_2$). We
  consider two such split components $S_1$ and $S_2$ as equivalent, if $S_1$ is nested within $S_2$ or vice versa. To
  obtain a graph $\GU_1$ from \GU, contract each set of equivalent split components of $v$ contained in $V_2$ into a
  single vertex adjacent to $v$. We let $Y_2$ denote the set containing these contracted vertices. These contractions may result in edges
  that appear in multiple graphs but are not shared. We keep these parallel edges by implicitly assuming that each of them
  contains a subdivion vertex. For each vertex $x \in Y_2$, replace the shared edge $vx$ with an exclusive edge $vx$ in
  \G{l}. Finally, contract every remaining connected component of set~$V_2$ into a single vertex.
  To obtain the graph $\GU_2$ with contraction vertices $Y_1$, symmetrically apply the same procedure for $V_2$ on~\GU.
  Since $v$ has fixed rotation and forms a wheel with its adjacent edges in $\G{l}$, its rotation remains fixed in $\GU_1$
  and $\GU_2$, respectively, after the contractions; see \Cref{fig:Decomposition}.
  
  Given simultaneous embeddings of both $\GU_1$ and $\GU_2$, let $\mcE_1$ and $\mcE_2$ be the corresponding embeddings of
  the shared subgraphs $G_1 \coloneqq G[V_1 \cup \{v\}]$ and $G_2 \coloneqq G[V_2 \cup \{v\}]$, respectively. The
  \emph{compatibility embedding} of these two simultaneous embeddings is an embedding of the shared graph $G$ obtained by
  merging the embeddings $\mcE_1$ and $\mcE_2$ at $v$ such that $v$ satisfies the rotation $\sigma_v$. We can show that
  the existence of a a simultaneous embedding of \GU where the shared graph has the compatibility embedding is equivalent
  to the existence of any simultaneous embedding.
  
  \begin{figure}[t]
    \centering
    \includegraphics{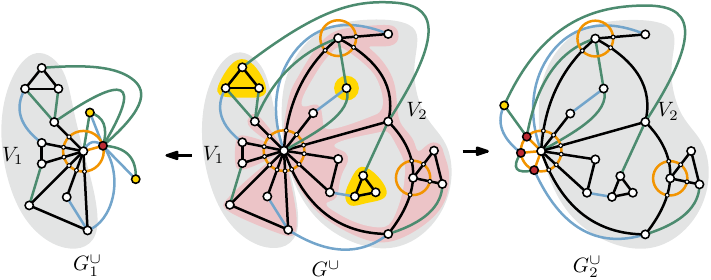}
    \caption{The graphs $\GU_1$ and $\GU_2$ corresponding to the block-preserving cut $\{V_1, V_2\}$ of $v$ in \GU. The
      red vertices are the vertices resulting from contracting the nested split components of $v$. The edges of graph $\G
      {l}$ fixing the rotation of all cutvertices using wheels are drawn in orange.}
    \label{fig:Decomposition}
  \end{figure}
  
  \begin{lemma}
    \label{lm:cvDecomposition}
    Let \GU be a \SEFE instance with a fixed rotation for all cutvertices of $G$ and let $v$ be a cutvertex of $G$. Then
    \GU admits a simultaneous embedding if and only if, for any block-preserving cut of $v$, (i) the corresponding graphs
    $\GU_1$ and $\GU_2$ admit simultaneous embeddings $\mcEU_1$ and $\mcEU_2$, respectively, and (ii) \GU admits a
    simultaneous embedding where the shared graph has the compatibility embedding of $\mcEU_1$ and $\mcEU_2$.
  \end{lemma}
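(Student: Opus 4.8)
The plan is to prove both directions for an arbitrarily fixed block-preserving cut $\{V_1, V_2\}$ of $v$. The backward direction is almost immediate: condition~(ii) already asserts that $\GU$ admits a simultaneous embedding (namely one in which the shared graph carries the compatibility embedding of $\mcEU_1$ and $\mcEU_2$), so once (i) supplies the embeddings $\mcEU_1,\mcEU_2$ needed to even define this compatibility embedding, there is nothing left to show. Hence the entire content lies in the forward direction, where I would start from an arbitrary simultaneous embedding $\mcEU$ of $\GU$ and establish both (i) and (ii).

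For~(i) I would obtain $\mcEU_1$ (and symmetrically $\mcEU_2$) simply by performing on $\mcEU$ the same contractions that define $\GU_1$ from $\GU$: collapsing each class of equivalent, $\sigma_v$-nested split components in $V_2$, as well as each connected component of $V_2$, into a single vertex. Contracting a connected subgraph never destroys planarity, and since we only merge vertices and edges that already lie in a common region delimited by $v$ and its nesting, the cyclic edge orders (except possibly at $v$) and the relative positions of the shared components are preserved; this yields a planar embedding of each input graph of $\GU_1$ with consistent edge orderings and relative positions. Replacing the shared edges $vx$ for $x \in Y_2$ by edges exclusive to $\G{l}$ only relaxes the synchronization constraints and is consistent with the wheel at $v$ in $\G{l}$, so the fixed rotation $\sigma_v$ is retained. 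Thus $\GU_1$ and $\GU_2$ admit simultaneous embeddings, proving~(i). Taking these particular contraction-derived embeddings, their compatibility embedding is by construction exactly the embedding of $G$ induced by $\mcEU$, so $\mcEU$ itself witnesses~(ii).

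The subtle point -- and the step I expect to be the main obstacle -- is that the decomposition must stay correct no matter which simultaneous embeddings $\mcEU_1,\mcEU_2$ are used, since the algorithm computes them independently of any global solution. Here I would invoke \Cref{lm:cycleFixedOrder}: for each split component $S$ of $v$, its attachment vertices toward the opposite side lie on a common face and therefore on a face-embeddable walk of $S$, so their cyclic order is fixed up to reversal, independent of the internal embedding of $S$. Consequently the ``interface'' that every split component of $V_1$ presents to $V_2$ (and vice versa) is embedding-independent, while the nesting of the split components is already pinned down by $\sigma_v$. Therefore, whenever $\GU$ admits some simultaneous embedding, one can re-embed the $V_1$- and $V_2$-parts so that their attachment vertices appear on the outer face in the prescribed interface order and then glue them at $v$ according to $\sigma_v$, obtaining precisely the compatibility embedding of the algorithm's chosen $\mcEU_1,\mcEU_2$ without introducing crossings or disturbing relative positions.

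Establishing exactly this is the bulk of the work: I would have to verify that the contractions into $Y_1$ and $Y_2$ faithfully encode the nesting and the union-graph connectivity of the removed split components, that the attachment vertices can indeed be routed to the outer face of each summarized part, and that the re-gluing keeps all $k$ drawings simultaneously planar with consistent relative positions across the $v$-boundary. The concluding membership test -- whether $\GU$ realizes the resulting fixed shared embedding -- is then delegated to the algorithm of Bläsius and Rutter~\cite{Blaesius15}.
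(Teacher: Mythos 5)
Your overall route matches the paper's: the backward direction is immediate, Property~(i) follows by performing the defining contractions directly on a simultaneous embedding of \GU (the nested split components are consecutive in $\sigma_v$, so the contractions preserve planarity), and Property~(ii) is to be obtained by an embedding-exchange argument powered by \Cref{lm:cycleFixedOrder}. You also correctly spot the decisive subtlety: (ii) must hold for \emph{arbitrary} simultaneous embeddings $\mcEU_1, \mcEU_2$, since the recursion computes them independently of any global solution, so your shortcut via the contraction-derived embeddings only witnesses (ii) for one particular choice and cannot close the argument on its own.

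However, exactly that decisive step is where your proposal stops: you list what ``would have to be verified'' instead of verifying it, and those obligations are the actual content of the paper's proof. Concretely, two things are missing. First, before \Cref{lm:cycleFixedOrder} can be applied at all, one must argue that in every simultaneous embedding $\mcEU$ of \GU the attachment vertices $X$ of the component $C_1$ of $G_1$ containing $v$ all lie on one common face of $C_1$; the paper derives this from the wheel in \G{l}, which makes $C_2 - v$ connected in $\GU - C_1$, so the entire opposite side occupies a single face of $C_1$. You assert this common-face property without justification. Second, the exchange itself: by \Cref{lm:cycleFixedOrder} the order of $X$ along the relevant facial cycle agrees, up to reversal, between $\mcEU$ and $\mcEU_1$, and the fixed rotation at $v$ resolves the reversal ambiguity; this is what licenses replacing $\mcEU[C_1]$ by $\mcEU_1[C_1]$ (and likewise every other component of $G_1$, then symmetrically for $G_2$ using $\mcEU_2$) while preserving planarity of all $k$ graphs, consistent edge orderings, and consistent relative positions. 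Your sketch names these tasks but performs none of them, so the forward direction of (ii) --- the only nontrivial part of the lemma --- remains unproven as written.
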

  \begin{proof}
    We first show the reverse direction. If both conditions are satisfied, then the second condition immediately yields
    the desired simultaneous embedding of~\GU. By the definition of the compatibility embedding, this embedding satisfies
    the fixed rotation of $v$.
    
    Conversely, assume that we have a simultaneous drawing $\Gamma^\cup$ of \GU. We first show that Property (i) holds.
    Since $v$ has a fixed rotation $\sigma_v$ and the nested split components we contract are consecutive with respect to
    $\sigma_v$, these contractions retain the planarity of the drawing. The same applies for the contractions of the other
    shared components. In the resulting drawing, we only have to replace shared edges with exclusive edges in \G{l} to
    obtain a simultaneous drawing of $\GU_1$ (respectively $\GU_2$).
    
    For Property (ii), consider a simultaneous embedding $\mcEU$ of \GU and simultaneous embeddings $\mcEU_1$ and
    $\mcEU_2$ of $\GU_1$ and $\GU_2$, respectively, obtained from some block-preserving cut of $v$ (Property (i)). Let
    $C_1$ (respectively $C_2$) be the connected component of $G_1$ ($G_2$) containing~$v$. Let the attachment vertices $X$
    denote the set of vertices of $C_1$ that are connected to a vertex of $C_2$ in \GU via a path that is vertex disjoint
    from $C_1$ except for its endpoints. Since the vertices of $C_2 - v$ are connected in $\GU - C_1$ (due to the wheel in
    \G{l}), all vertices of $X$ must be incident to a common face $f$ of $C_1$ in \mcEU. By \Cref{lm:cycleFixedOrder}, the
    order of $X$ on this facial cycle is unique up to reversal. Since the same holds for the vertices of $X$ on the face
    $f_1$ of $C_1$ containing the contraction vertices $Y_2$ in $\mcEU_1$, and since $v$ has a fixed rotation, we can
    assume that $X$ has the same order on the facial cycles of $f$ and $f_1$ in $\mcEU$ and $\mcEU_1$, respectively. We
    can therefore replace the embedding $\mcEU[C_1]$ in \mcEU with its corresponding embedding $\mcEU_1[C_1]$ from
    $\mcEU_1$ while retaining the planarity of \mcEU. All other connected components of $G_1 - C_1$ can be fixed in
    $\mcEU$ with their embedding of $\mcEU_1$ using the same argumentation. Applying the same procedure for the subgraph
    $G_2$ using $\mcEU_2$, we thus obtain a simultaneous embedding of \GU where the shared graph has the compatibility
    embedding.
  \end{proof}
  
  Using \Cref{lm:cvDecomposition}, we can solve our \SEFE instance \GU with a fixed rotation for every cutvertex
  of $G$ as follows. First, pick an arbitrary cutvertex $v$ of $G$ and an arbitrary block-preserving cut of $v$. As long
  as $G$ contains a cutvertex, we always find such a cut where we can make progress towards biconnectivity.
  Decompose the instance into the instances $\GU_1$ and $\GU_2$ as described above and solve them recursively. Note that
  our decomposition ensures that $\GU_1$ and $\GU_2$ still have a fixed rotation for all cutvertices due to the
  wheels in graph~\G{l}. If $\GU_1$ or $\GU_2$ is a no-instance, we can report that \GU is a no-instance by \Cref
  {lm:cvDecomposition}. Otherwise, we obtain a pair of simultaneous drawings, and we test whether \GU admits a \SEFE with
  the corresponding compatibility embedding as a fixed embedding of the shared graph using the algorithm by Bläsius and
  Rutter~\cite{Blaesius15}, which solves \SEFE in quadratic time if every connected component of the shared graph has a
  fixed embedding. By \Cref{lm:cvDecomposition}, the result of this test determines whether \GU is a yes-instance
  of \SEFE.
  
  At the base case of the recursion, every connected component of the shared graph $G$ is biconnected. To solve this
  restricted case in polynomial time, we can use the algorithm by Bläsius et al.~\cite{BlasiusKR18}. Note that every
  shared block of the original instance appears in exactly one base case of the recursion, we thus have a linear number of
  instances that need to be solved, each of them has linear size. Together with the running time of
  $O((\Delta!)^{\cv(G)})$ that was necessary to fix the rotation of all cutvertices in the shared graph, we finally obtain
  the following result.
  \begin{theorem}
    \label{thm:CvMaxDeg}
    When $k$ is part of the input, \SEFE is FPT with respect to the number of cutvertices $\cv(G)$ and the maximum degree $\Delta$ of $G$ and
    can be solved in ${O((\Delta!)^{\cv(G)} \cdot n^{O(1)})}$ time.
  \end{theorem}
  
  \section{Conclusion}
  \label{sec:conclusion}

  Our FPT algorithms for the vertex cover number (\Cref{sec:vc-union}) and the feedback edge set number (\Cref{sec:fesunion}) of the union graph both additionally require the number of input graphs~$k$ as a parameter, raising the question whether similar results can be obtained for unbounded~$k$.
  In contrast, the reduction showing para-NP-hardness of \SEFE with respect to the twin cover number of the union graph (\Cref{sec:twincoverunion}) requires~$k$ to be unbounded.
  Is the problem FPT parameterized by the twin cover number plus~$k$?
  
  For the shared graph, we have shown that all eight parameters we considered (and most of their combinations; see \Cref
  {fig:Table1}) are intractable. This even includes the vertex cover number, one of the strongest graph parameters that
  upper bounds numerous other metrics like the feedback vertex set number, the treedepth, the pathwidth, and the treewidth. Most of the
  intractabilities regarding degree-related parameters follow from our hardness reduction for \SEFE with maximum degree~4
  in \Cref{thm:unionmaxdeg4}. However, \Cref{thm:unionmaxdeg4} only holds if the number $k$ of input graphs is unbounded, as the reduction inherently requires many input graphs. It is therefore an interesting question whether these
  parameters become tractable when one additionally considers $k$ as a parameter.

  \bibliography{bibliography}
  
\end{document}